%% file: main.tex
\documentclass[journal]{IEEEtran}

\usepackage{commands_and_envs}

\begin{document}

\input{titlepage}
\input{abstract}
\input{content}
\input{appendix}
\input{acknowledgements}
\input{bibliography}

\end{document}

%% file: titlepage.tex
\title{Towards an Optimally Distributed Quantum Fourier Transform Circuit}

\author{
    \IEEEauthorblockN{
        Zachary~Vernec\thanks{This work was primarily conducted while Z.~Vernec was with the Department of Physics.}, 
        Michael~Silver, 
        and Hans-Arno~Jacobsen 
    } \\
    \IEEEauthorblockA{
        \textit{Edward S. Rogers Sr. Department of Electrical \& Computer Engineering} \\
        \textit{University of Toronto} \\
        Toronto, Canada \\
        \{zachary.vernec, m.silver\}@mail.utoronto.ca, jacobsen@eecg.toronto.edu
    }
}

\pagenumbering{gobble} 

\maketitle

%% file: abstract.tex
\begin{abstract}
A promising avenue for scaling quantum computing is to connect quantum processing units (QPUs) by generating entanglement between them. 
This requires \emph{circuit partitioning}: partially rewriting quantum circuits to run on a distributed quantum system using quantum teleportation protocols, while preserving the unitary operation implemented by the circuit.
The key metric to minimize when partitioning is the \emph{e-bit count}, defined as the number of maximally entangled qubit pairs that must be generated between QPUs.
We focus on partitioning the quantum Fourier transform (QFT) circuit, which is widely used as a subroutine in quantum algorithms such as quantum phase estimation and arithmetic circuits. 
Specifically, we present a partitioning scheme based on optimal gate-packing, compare it against prior analytical partitioning schemes for the QFT, and evaluate it against partitions produced by general-purpose circuit partitioning algorithms.
We further validate our approach by implementing the partitioned circuit on quantum hardware.

\end{abstract}

\begin{IEEEkeywords}
Quantum Computing, QFT, Distributed Quantum Computing, Circuit Partitioning.
\end{IEEEkeywords}

%% file: content.tex
\section{Introduction} \label{sec:introduction}

\subsection{Motivation}

\IEEEPARstart{D}{istributed} quantum computing can be applied to solve the scaling issues that arise when building larger, more powerful quantum computing systems~\cite{caleffi_distributed_2024},~\cite{barral_review_2025}.
Indeed, while QPUs need many qubits to run many promising quantum algorithms such as Shor's, there are engineering difficulties in building a monolithic QPU at that scale~\cite{almudever_engineering_2017, fellous-asiani_optimizing_2023}, which has led quantum practitioners to consider connecting multiple quantum processing units (QPUs) into a distributed quantum system~\cite{cuomo_towards_2020,caleffi_distributed_2024,barral_review_2025}.
When multiple QPUs are connected using simple quantum channels that create entangled states across different QPUs, distributed quantum computing can be achieved by \emph{circuit partitioning}, where the circuit to run in a distributed manner is transformed to use mid-circuit measurement and classically controlled quantum gates when processing quantum information across QPU boundaries~\cite{baker_time-sliced_2020}.
As the entanglement creation operations that connect the QPUs are noisy and/or unreliable~\cite{main_distributed_2025,almanakly_deterministic_2025}, special care must be taken to ensure that the quantum circuit partitioning procedure is efficient in the amount of entanglement resource needed across QPUs.

\subsection{Problem Statement}

This paper describes a novel method for partitioning the quantum Fourier transform (QFT) circuit and collects, evaluates, and compares it with various other partitioning methods found in the literature.

Formally, an instance of our QFT partitioning problem is a set of $m$ QPUs with qubit capacities $n_1 \dots n_m$, pairwise connected by entanglement-generation channels.
A solution consists of (1) an assignment of the $n = n_1 + \dots + n_m$ input qubits of the QFT to the QPUs, respecting each QPU's capacity, and (2) a distributed circuit that implements the QFT under this assignment, in which the only operations crossing QPU boundaries are entanglement generation and classical communication (the ingredients of the quantum teleportation protocols described in Section II).
Note that the QFT we are considering has the same number of qubits as the total system capacity, so all QPUs are fully filled\footnote{For QFTs needing less qubits than the system capacity, it is trivial to discard extraneous qubits and QPUs; the formulas from this paper will continue to hold.}.
The objective is to minimize the number of \emph{e-bits} (EPR pairs $\ket{\Phi^+}$) consumed by these teleportation protocols in the distributed circuit, as they are the dominant source of errors in distributed quantum computation.
We additionally require that each QPU use at most a small constant number of ancilla qubits for entanglement generation; as we show in Sections VI and VII, this constraint materially changes which partitioning strategies are optimal.

Since the QFT circuit is often used as a subcircuit of larger quantum algorithms~\cite{harrow_quantum_2009, montanaro_quantum_2015, ruiz-perez_quantum_2017, wright_automatic_2021, bagherimehrab_nearly_2022, wright_noisy_2024, kahanamoku-meyer_fast_2024}, our findings can be used by any quantum computing practitioner who needs to partition a circuit that includes a QFT subcircuit, allowing them to choose the state-of-the-art circuit partition without needing their own computationally expensive analysis.

Indeed, while there are algorithms for compiling arbitrary quantum circuits to a distributed architecture~\cite{andres-martinez_automated_2019, andres-martinez_distributing_2024, burt_generalised_2024, burt_multilevel_2026, baker_time-sliced_2020, crampton_genetic_2024, g_sundaram_efficient_2021}, due to the generic nature of these compilers as well as the inherent complexity of compiling for distributed quantum computing~\cite{andres-martinez_distributing_2024}, there are few guarantees on the entanglement efficiency of the compilation results. The entanglement efficiency may be limited by the choice of compiler and by its compilation time.

Furthermore, we hope that by exploring various analytical schemes and computational algorithms for partitioning the QFT circuit, we can foster further insights into how best to develop strategies for analytically partitioning other circuits and how best to build new algorithms for quantum circuit partitioning.

\subsection{Approach} 

The approach taken is threefold.
First, we highlight a gate-packing-based QFT circuit partition that has not been explicitly described before and prove its optimality with respect to a particular choice of gates and packings.
Gate packing, as will be described in Section~\ref{sec:background}, refers to the idea of distributing more than one two-qubit gate using a single use of the gate teleportation protocol (and thus a single e-bit).  
Furthermore, we compare our QFT circuit partitioning scheme with previously described approaches, which have been developed in isolation and have not been compared to each other.
Some of these previous methods require generalization to be properly compared, and some need to be examined to find constant factors rather than asymptotic expressions.
Second, we evaluate the performance of circuit partitioning algorithms from the recent literature when given the QFT circuit as input, and find some evidence that our gate-packed QFT circuit is as optimal as the output of most partitioning algorithms in terms of required entanglement. 
We also discuss how one algorithm obtains lower required entanglement, and the trade-offs it makes to obtain such a result.
Third, we proceed with implementing the partitioned QFT circuits, studying their behaviours on quantum computer simulators and on physical devices. 
In particular, we consider how the fidelity of a QFT circuit changes when implemented in a distributed manner.

There are a few limitations to the scope of our approach.
First, we only consider partitioning the standard, exact QFT circuit.
This means excluding alternative architectures such as the fast parallel QFT~\cite{cleve_fast_2000}, which substantially alter the standard circuit structure to reduce gate depth at impractically large scales~\cite{van_meter_fast_2005}.
It also means ignoring variations of the QFT such as the approximate QFT~\cite{coppersmith_approximate_1994} or the optimistic QFT~\cite{kahanamoku-meyer_log-depth_2025}. 
Even though these variations may be useful in practice, they either have a massive increase in complexity or change the properties of the computed operation (e.g., an approximate QFT is not equivalent to a traditional QFT), and have never been studied in the context of distributed quantum computing.
Second, we consider only the case where each QPU is allowed a constant number of ancillas for entanglement generation, specifically either 1 or 2 per QPU.
Restricting the number of ancillas is sensible in the context of distributed quantum computing, where we are working under the assumption that scaling QPU sizes is difficult, but the specific restriction of a constant number of ancillas per QPU is arbitrary.

\subsection{Contributions}

The contributions of this research are fourfold.

First, we give an explicit description of a scheme for partitioning the QFT through the new circuit partitioning techniques known as \emph{gate packing}. 
We know this partition is optimal for two QPUs as it achieves the lower bounds on e-bit count given by quantum communication complexity.
For the case of more than two QPUs, we include a proof that this scheme is optimal when restricting to gate-packing-based circuit partitioning of the standard QFT circuit.

Second, we survey explicit partitioning schemes for the QFT, analyze them in full generality, and compare them to our proposed approach.
This enables those interested in partitioning a circuit involving the QFT to choose their preferred method by understanding the exact requirements in using different partitioning schemes for any number and size of QPUs.

Third, we compare our gate-packing-based distributed QFT circuit with the distributed QFT circuits output by circuit partitioners from the literature.
These algorithms are designed to take any arbitrary circuits as input, in contrast to our gate-packing scheme designed specifically for the QFT circuit.
This provides evidence towards the claim that gate packing gives an optimal partition of the QFT circuit when a single ancilla per QPU is used.
It also shows that gate partitioning is non-optimal when many ancilla qubits are allowed.

These results show the advantage of defining explicit partitioning schemes for the QFT as opposed to relying on circuit partitioners.
Any practitioner needing to partition the QFT circuit, if they have access to only one ancilla qubit per QPU for entanglement generation, would benefit from implementing the gate-packing-based partitioning scheme directly rather than running a circuit partitioner: they would save compilation time while obtaining the best result.
Furthermore, these results would hold when the practitioner has to partition a circuit which includes the QFT as a subcircuit of a larger quantum algorithm.

Fourth, we perform a preliminary investigation on the fidelity of results when implementing selected partitions of the QFT circuit in simulations and on hardware.
This shows that, without error mitigation, a non-partitioned QFT circuit is no different from various partitioned QFT circuits.
However, it also shows that more efficient circuit partitions are faster to run, and this improvement grows with circuit size.
This gives a starting point for future research into comparing implementations of circuit distributions of the QFT.

\subsection{Organization}

Section~\ref{sec:background} presents some background notation and terminology relating to circuit partitioning.
Section~\ref{sec:related-work} examines related work in partitioning the QFT, from previously defined QFT circuit partitioning schemes, to theoretical work establishing bounds on distributing the QFT unitary across two QPUs, to algorithms for partitioning arbitrary circuits, and to a methodology for evaluating any particular QFT implementation (whether distributed or not). 
Section~\ref{sec:content-proposed-dqft} examines our main contribution, a gate-packing-based circuit partitioning of the QFT circuit, finds an analytical expression for the e-bit requirements, and proves minimality when comparing to similar circuits. 
Section~\ref{sec:analysis-previous} elaborates on the e-bit requirements of previous partitioned QFT circuits to compare them with the gate-packing-based QFT circuit partitioning scheme established and analyzed in the preceding section.
Section~\ref{sec:content-empirical} compares our gate-packed partitioned QFT to partitions of the QFT found by circuit partitioners.
Section~\ref{sec:content-implementation} discusses the challenges in implementing our gate-packed QFT in software (through simulation) and on hardware (through IBM QPUs), the solutions developed for surmounting these challenges, and the methodology used for evaluating the fidelity of our implementation.

\section{Background} \label{sec:background}

\begin{figure*}[!t]
    \centering
    \includegraphics[width=0.80\linewidth]{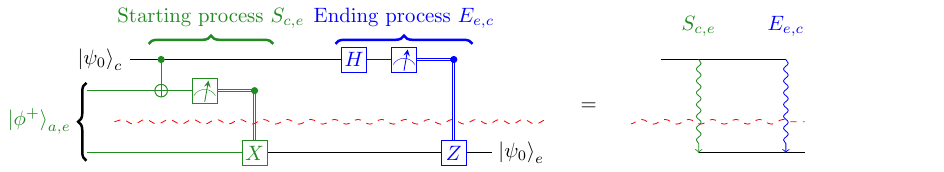}
    \caption[Teledata Circuit Diagram Under Wu~et~al.'s Framework]{A state teleportation protocol (teledata) using starting and ending processes. The LHS uses a common circuit notation such as would be used in~\cite{nielsen_quantum_2010}, while the RHS has the same circuit in the notation of~\textcite{wu_entanglement-efficient_2023} with snake arrows representing the flow of information.}
    \label{fig:Wu-teledata}
\end{figure*}

\begin{figure*}[!t]
    \centering
    \includegraphics[width=0.80\linewidth]{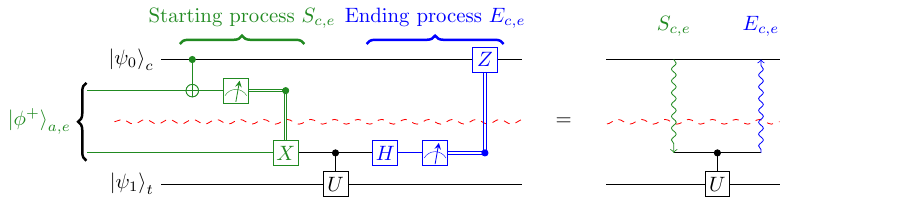}
    \caption[Telegate Example Circuit Diagram Under Wu~et~al.'s Framework]{A gate teleportation protocol (telegate) using starting and ending processes. The same diagrammatic notation is used on the left-hand side and right-hand side as Fig.~\ref{fig:Wu-teledata}.}
    \label{fig:Wu-telegate}
\end{figure*}

There are three essential components to a circuit partitioning framework: (1) entanglement generation, which creates the entanglement resource, (2) information distribution, which consumes the entanglement resource, and (3) information centralization, which ensures the information distribution only acts temporarily.
In both (2) and (3), \emph{mid-circuit measurement} and \emph{classically controlled quantum gates} are necessary.
This is when a qubit register is measured before the end of the circuit, and a following gate is applied conditionally with respect to the measurement outcome.
Circuits using mid-circuit measurement and classical control are also called \emph{dynamic circuits}.

However, as part of a separation of concerns, the exact process for entanglement generation (also known as entanglement distribution) is discussed as a separate abstraction layer, more closely related to quantum networking.
It is for this reason that entanglement generation will not be considered here; the information distribution component will be assumed to always have access to the necessary entanglement.

In this paper, we use the circuit partitioning framework of~\citeauthor{wu_entanglement-efficient_2023}.
The entanglement resource considered in this framework is that of \emph{e\nobreakdash-bits}, also known as EPR pairs, since they are a maximally entangled state that is straightforward to describe.
Furthermore, e-bits are a good abstraction considering that entanglement generation is a physical process: for example, spontaneous parametric down-conversion can be used to produce Bell pairs~\cite{shin_bell_2019}, which are equivalent to e-bits up to simple local operations.

Now, the entanglement generation and the information distribution steps are combined into one primitive called the \emph{entanglement assisted starting process} (or simply \emph{starting process}) $\Starting_{c_0,a_1}$ where the subscript indicates the process acts on qubit register $c_0$ and distributes the quantum information onto a new ancillary register $a_1$.
This primitive takes as input a qubit $\ket{\psi} = \alpha\ket{0}+\beta\ket{1}$ in some register $c_0$ of QPU $0$.
First, it generates an e-bit $\ket{\Phi^+} = \frac{1}{\sqrt{2}} \left( \ket{00} + \ket{11} \right)$ shared between ancillary registers $a_0$ of QPU~0 and $a_1$ of QPU~1.
These ancillary registers are called communication qubits.
Then, the process applies a $\CX$ gate with control $c_0$ and target $a_0$, followed by measuring $a_0$ giving a result $r \in \set{0,1}$, and finally using the measurement result $r$ to apply $\X^r$ to the other ancillary register $a_1$ (i.e., apply $\id$ if $r=0$ or $\X$ if $r=1$).
This has the effect
\begin{equation}
    \Starting_{c_0,a_1} : \ket{\psi}_{c_0} \mapsto \left( \alpha\ket{00}+\beta\ket{11} \right)_{c_0,a_1} \;,
\end{equation}
where the subscript indicates the qubit registers which hold each quantum state.
Note that the information contained in the parameters $\alpha,\beta$ defining $\ket{\psi}$ is now distributed across both QPUs through this resulting state.

Now, the \emph{ending process} primitive $\Ending_{a,k}$ is the information centralization step, acting as the inverse of the starting process to restore the quantum information to a non-distributed state. 
When acting on a state $\alpha\ket{00}+\beta\ket{11}$ that is identical to the output of a starting process, the ending process has the effect
\begin{equation}
    \Ending_{k,a} : \left( \alpha\ket{00}+\beta\ket{11} \right)_{a,k} \mapsto \ket{\psi}_{k} \;,
\end{equation}
where $\ket{\psi}$ is still defined as above with parameters $\alpha,\beta$.
Specifically, this process is defined as applying $\H$ on register $a$, measuring its result $r \in \set{0,1}$, and applying $Z^r$ to the register $k$.

The circuit definition of a starting process (in green) and an ending process (in blue) can be seen on the left-hand sides of Fig.~\ref{fig:Wu-teledata} or Fig.~\ref{fig:Wu-telegate}.

\begin{figure*}[!t]
    \centering
    \includegraphics[width=\linewidth]{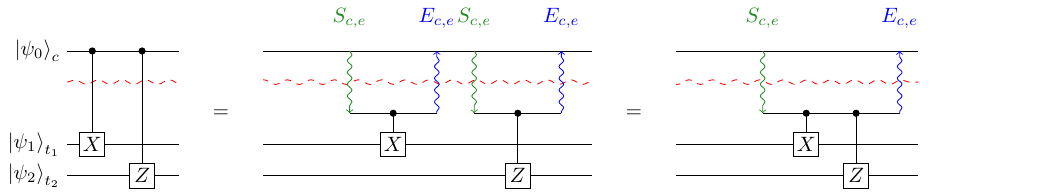}
    \caption[Example of Gate Packing]{An example of gate packing the $\CX$ gate and the $\CZ$ gate, both with control qubit $c$ and targeting qubits $t_1, t_2$ on the same QPU. On the left is the circuit that needs to be partitioned. In the middle is the circuit partition without gate packing, which requires two starting processes and therefore two e-bits. On the right is an equivalent partition where the $\CX$ and $\CZ$ gates are packed so that a single starting process (and single e-bit) is required.}
    \label{fig:Wu-packing}
\end{figure*}

\subsection{Quantum Teleportation Protocols (Teledata and Telegate)} \label{sec:Wu-teledata-telegate}

There are two protocols useful in circuit partitioning that can be implemented by combining the starting and ending process primitives.

The first is the \emph{teledata} protocol, also known as \emph{quantum state teleportation}~\cite{bennett_teleporting_1993}.
State teleportation from a state in register $c_0$ to register $a_1$ is defined as
\begin{equation} \label{eq:teledata}
    \TP_{c_0 \rightarrow a_1} := \Ending_{a_1,c_0} \of \Starting_{c_0,a_1} \;,
\end{equation}
where $\of$ denotes the composition of operators.
It acts as $\ket{\psi}_{c_0} \mapsto \ket{\psi}_{a_1}$.

The second is the \emph{telegate} protocol, also known as \emph{quantum gate teleportation}~\cite{eisert_optimal_2000}.
It is essentially an in-place teledata protocol ($\TP_{c_0 \rightarrow c_0} = \Ending_{a_1,c_0} \of \Starting_{c_0,a_1}$) but with the addition of an appropriate gate between the starting and ending processes.
Indeed, suppose we want to apply some gate $\U$ to act non-locally between register $c_0$ on some QPU and some set of target registers $T$ on some other QPU.
Suppose further that the effective action of $\U$ on register $c_0$ is the identity: that is, that
\begin{align*}
    \U_{c_0} \ket{0}_{c_0} &= \ket{0}_{c_0} \tensor U'_T \\
    \U_{c_0} \ket{1}_{c_0} &= \ket{1}_{c_0} \tensor U''_T
\end{align*}
for some arbitrary $\U',\U''$ acting on $T$.
A common example are controlled unitaries such as $\CX$, $\CZ$, $\gate{CXX}$ with the control being $c_0$, or $\gate{CCX}$ (Toffoli) with one of the controls being $c_0$.
Now, the telegate is defined as 
\begin{equation} \label{eq:telegate}
    U_{c_0,T} := \Ending_{c_0,a_1} \of \U_{a_1, T} \of  \Starting_{c_0,a_1} \;,
\end{equation}
where again $a_1$ is the entanglement-assisting ancilla register.
Note that since $\U$ doesn't directly affect the state on registers $c_0,a_1$, the ending process acts in the same way as before to recreate the state $\ket{\psi}_{c_0}$, while not affecting the entanglement created from applying $\U$.
Therefore, the telegate procedure described in Equation~\eqref{eq:telegate} is equivalent to applying the gate $\U$ non-locally between $c_0$ and $T$.

Finally, note that teledata and telegate can be combined by changing $\Ending_{c_0,a_1}$ to $\Ending_{a_1,c_0}$ in Equation~\eqref{eq:telegate}, denoted as $U_{c_0 \rightarrow a_1, T}$.

The left-hand sides of Fig.~\ref{fig:Wu-teledata} and Fig.~\ref{fig:Wu-telegate} show a circuit representation of the teledata and telegate protocols with the starting and ending processes and the e-bit clearly identified.
Furthermore, the right-hand side of these figures presents a simplified notation introduced in~\cite{wu_entanglement-efficient_2023} which hides the specific gates used in the implementation of the starting and ending processes.
This greatly helps in reducing visual clutter when describing how a larger circuit can be partitioned.

\subsection{Gate Packing} \label{sec:Wu-gate-packing}

An advantage of decomposing quantum teleportation protocols into starting and ending processes is that more complicated non-local operations can be built from these primitives.

The most relevant technique to this paper for building such a non-local operation is that of \emph{gate packing}.
This technique was hinted at by \citeauthor{yimsiriwattana_generalized_2004}~\cite{yimsiriwattana_generalized_2004} but not fully developed and identified as its own technique until the starting and ending processes framework was introduced by \citeauthor{wu_entanglement-efficient_2023}~\cite{wu_entanglement-efficient_2023}.

At the most basic level, gate packing involves identifying sequences of two-qubit gates that must be partitioned, possibly with single-qubit gates interspersed, and partitioning the entire sequence using a single starting process and a single ending process bracketing the entire sequence.
This is done while ensuring that the new sequence of gates (that now includes the starting and ending processes) is equivalent at an abstract level to the previous, which might mean that the sequence of gates in between the starting and ending processes might be modified.
This can allow for some savings in e-bits needed for circuit distribution.

The intuition for this is that any given sequence of gates is simply some specific decomposition of a more complicated unitary.
However, this specific decomposition may not efficiently use e-bits as a resource if each component is partitioned separately.
Instead, packing the sequence of gates amounts to finding an alternate decomposition of the same complicated unitary, and since there are a large number of ways to decompose a complicated unitary, doing so may lead to finding a decomposition which is more efficient in terms of e-bit requirements. 

It is important to note that there are various conditions that a sequence of gates must satisfy to be packed together.
For details, see~\cite{wu_entanglement-efficient_2023}, in which some conditions for gate packing are given with proofs.
The most relevant case of gate packing for the circuit distribution of the QFT introduced in this paper is also the simplest: a sequence of controlled unitaries $CU^{(1)}\dots, CU^{(l)}$ whose control qubits are the same and whose target qubits are on the same QPU can be packed together as in Fig.~\ref{fig:Wu-packing}.
Note that this simple version of gate packing was first described in~\cite{yimsiriwattana_generalized_2004}, but was not identified as a distinct concept and was not generalized to more complicated gate sequences.

Another example of a more complicated technique that can be used for circuit distribution is that of the  \emph{detached gate}, as shown in Fig.~\ref{fig:detached-gate}.
Suppose we want to non-locally implement a gate $U$ between a control qubit~$a$ on QPU~$A$ and a target qubit~$b$ on QPU~$B$.
In the non-detached case, the telegate follows Equation~\eqref{eq:telegate}, with a single starting process from qubit~$a$ to an ancilla on QPU~$B$.
The gate~$U$ is implemented locally on QPU~$B$ between the ancilla and qubit $b$ before the ending process restores qubit~$a$.
In the detached case, however, there is a third QPU~$C$ involved; both qubits~$a$ and~$b$ undergo a starting process from their respective QPUs to QPU~$C$.
The gate~$U$ is now implemented locally on QPU~$C$ before both qubits~$a$ and~$b$ are restored by their respective ending processes.
Note that this requires two e-bits instead of one, which is worse in this case but may be advantageous in some circumstances due to gate packing.

\begin{figure}[!t]
    \centering
    \includegraphics[width=0.6\linewidth]{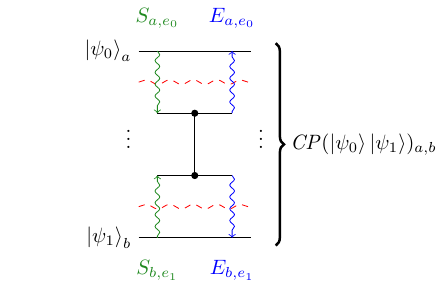}
    \caption[Example of a Detached Gate]{An example of a detached gate. A non-local $\CZ$ gate is applied between qubits of register $a$ (on QPU $A$) and register $b$ (on QPU $B$). However, due to both having a starting process to distribute towards a middle QPU $C$, the local $\CP$ gate is applied between qubits of QPU $C$.}
    \label{fig:detached-gate}
\end{figure}

\begin{figure*}
    \centering
    \includegraphics[width=0.85\linewidth]{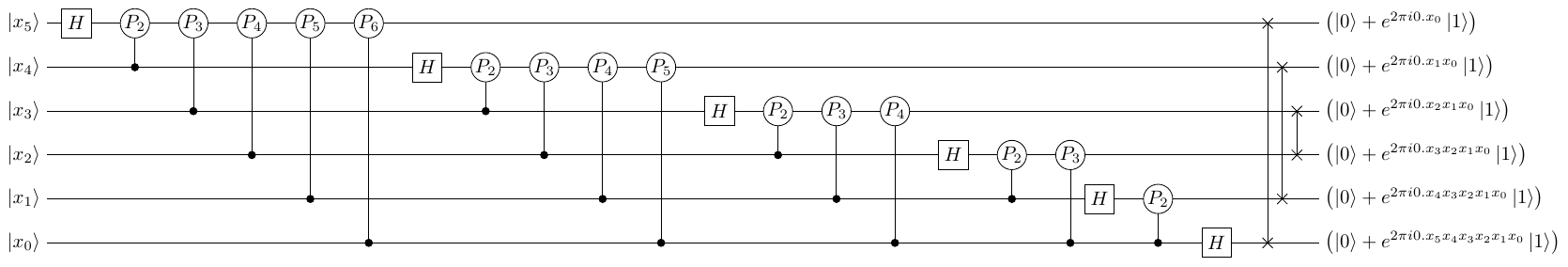}
    \caption[Traditional QFT Circuit Diagram]{The traditional circuit for the quantum Fourier transform on 6 qubits, where the subscript $k$ of a controlled phase gate defines a phase factor $e^{2\pi i/2^{k}}$ as in~\cite{nielsen_quantum_2010}.}
    \label{fig:traditional-qft}
\end{figure*}

\begin{figure*}[!t]
    \centering
    \includegraphics[width=0.85\textwidth]{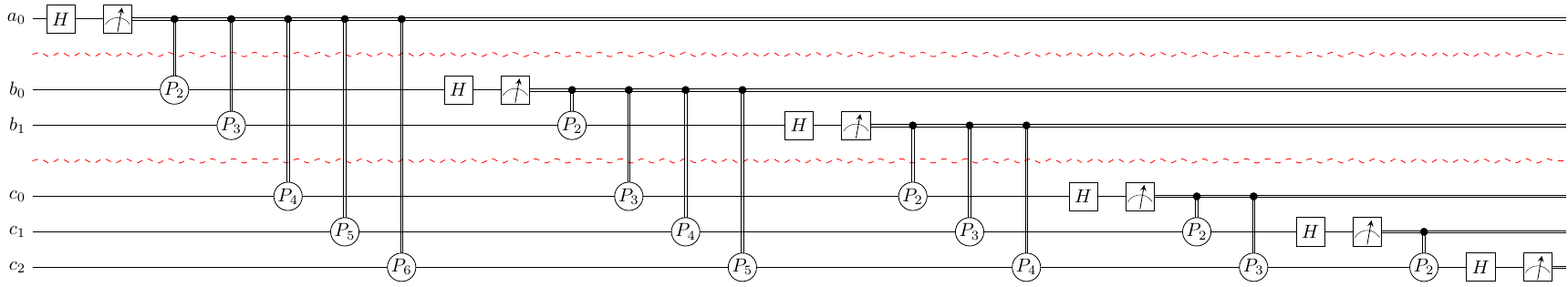}
    \caption[Semi-Classical QFT Circuit Diagram]{Semi-classical QFT circuit on 6 qubits. Note that the double lines represent classical information propagation and classical control.}
    \label{fig:semi-classical-qft}
\end{figure*}

\section{Related Work} \label{sec:related-work}

There are four main areas of study that relate to the present work: (1) analytical schemes to rewrite and/or distribute the QFT circuit, (2) theoretical bounds on efficiently distributing the QFT, (3) partitioning algorithms that have been benchmarked on the QFT circuit, and (4) a method for measuring the fidelity of a noisy QFT implementation.
We discuss these in order.

\subsection{Circuit Partitions of the QFT Circuit} \label{sec:other-distributed-qft}

The QFT circuit is traditionally defined as in Fig.~\ref{fig:traditional-qft}, with sequences of Hadamard gates $\H$, controlled phase gates $\gate{CP}$, and ending $\SWAP$ gates.
There are a few common variants of this circuit.

The first notable variant is the \emph{core} QFT, which is the QFT without the ending $\SWAP$ gates~\cite{chen_quantum_2023}.
Since the $\SWAP$ gates serve only to reorder the output into the correct endianness, this is not required to be applied using quantum operations.
The classical controller part of the quantum computer can keep track of the endianness of the circuit throughout its execution and, when a following gate needs to be applied on some qubits, instead apply the gate on where the qubits should be given the tracked endianness at that point.
Similarly, after measuring the circuit output, classically permute the resulting bits to match the permutation the SWAP gates would have effected.
This is a \emph{virtual} gate, where the classical preprocessing, controller, and postprocessing take over the role of a quantum gate.
Since virtual gates can be implemented classically without introducing errors, for the rest of this paper, the $\SWAP$ gates will not be shown in the circuit diagrams and will be assumed to be implemented virtually.

Another common variant of the QFT is the \emph{QFT+Measurement} operation~\cite{griffiths_semiclassical_1996}.
This is the (core) QFT circuit, which is immediately followed by a measurement of all qubits.
Note that many applications of the QFT (or inverse QFT) circuit are of this form, such as Quantum Phase Estimation (and consequently Shor's algorithm).
However, as the QFT+Measurement is a measurement operation instead of a unitary, it cannot be used coherently within a larger quantum circuit---a requirement for implementing QFT-based operations such as quantum-walk shifts and arithmetic circuits.

Following the QFT+Measurement research direction, \textcite{griffiths_semiclassical_1996} present the \emph{semi-classical} version of the QFT+Measurement circuit.
The circuit is as shown in Fig.~\ref{fig:semi-classical-qft}.
By the deferred measurement principle, it is equivalent to the core QFT circuit immediately followed by measuring all qubits.
Note that this gives a distributed circuit, with the advantage that all non-local operations are replaced by classically controlled single-qubit gates.
This means that the circuit can be distributed without the need for a quantum channel to pre-share e-bits.
However, it is limited in applicability due to only implementing the QFT+Measurement operation instead of the unitary QFT.

Alternatively, \textcite{yimsiriwattana_generalized_2004} present a circuit distribution for the unitary QFT using telegates, after having introduced telegates in their early circuit distribution framework.
The procedure is simple: for each two-qubit gate which must act non-locally, implement it using a telegate.
This can be seen in Fig.~\ref{fig:yimsiriwattana-qft}.
\begin{figure*}[!t]
    \centering
    \includegraphics[width=0.65\textwidth]{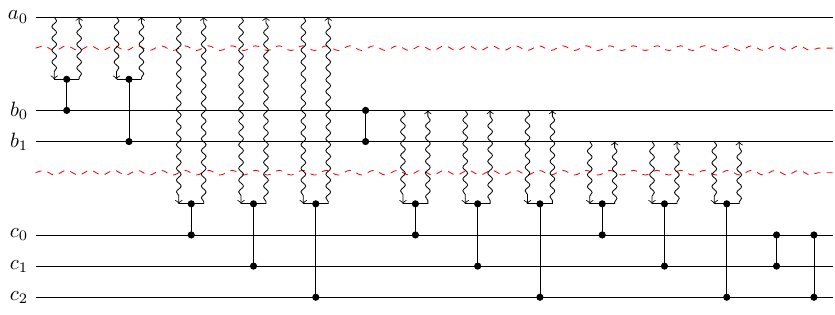}
    \caption[Yimsiriwattana and Lomonaco~Jr's Distributed QFT Circuit Diagram]{\citeauthor{yimsiriwattana_generalized_2004}'s distributed QFT circuit on 6 qubits across 3 uneven-sized QPUs, drawn using the starting and ending process notation. Note that for simplicity, neither the $\CP$ gate angles nor the $\H$ gates are displayed.}
    \label{fig:yimsiriwattana-qft}
\end{figure*}
Note that this does not take advantage of gate packing.
Due to the simplicity of this implementation, we call it the \emph{naive} implementation.%
\footnote{Remark: This term is not used pejoratively and is not a criticism of the original work. Implementing a distributed QFT was not the main focus of~\cite{yimsiriwattana_generalized_2004}, and their circuit partitioning is one of the earliest examples in the literature of partitioning a full circuit.}

\begin{figure*}[!t]
    \centering
    \includegraphics[width=0.65\textwidth]{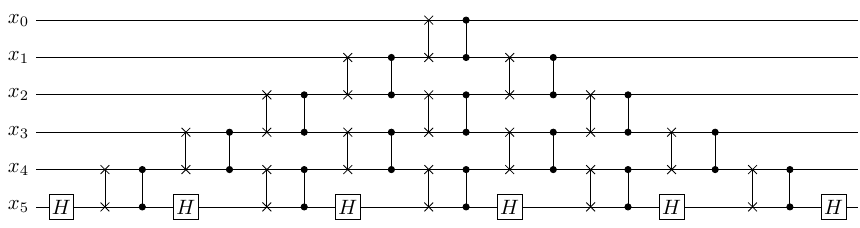}
    \caption[Linear Topology QFT Circuit Diagram]{The linear topology QFT on 6 qubits, as diagrammed in~\cite{jin_optimizing_2024}. Note the angles of the $\CP$ gates are not shown.}
    \label{fig:linear-qft}
\end{figure*}
\begin{figure*}[!t]
    \centering
    \includegraphics[width=0.65\textwidth]{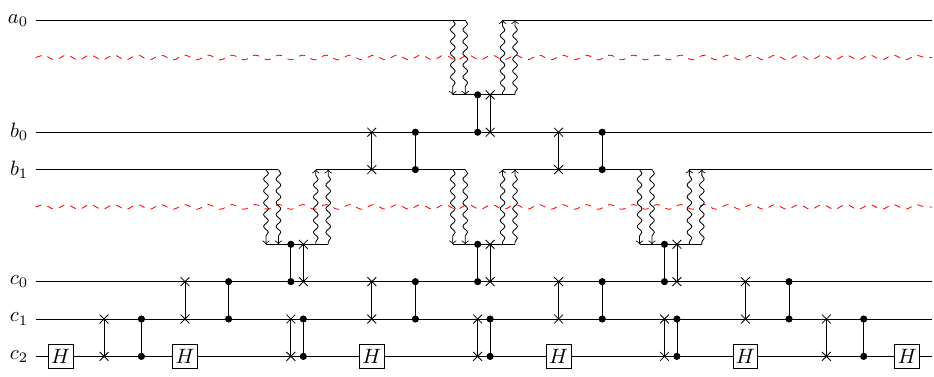}
    \caption[Partitioned Linear Topology QFT Circuit Diagram]{The linear topology QFT on 6 qubits, using teledata-based circuit distribution. Recall that the double snake arrows represent state teleportation.}
    \label{fig:linear-qft-distributed}
\end{figure*}

\textcite{rodney_van_meter_communications_2004} examines mapping the QFT across a distributed quantum system composed of interconnected linear topology QPUs, where each QPU has a linear topology.
His method consists of mapping to the full topology of the distributed quantum system (both including intra-device and inter-device topologies) using $\SWAP$ gates to ensure the circuit matches the topological constraints.
The paper also considers analytical expressions for the number of inter-device $\SWAP$ gates for distributing an $n$ qubit QFT evenly across $m$ QPUs, but considers the inter-device $\SWAP$ gates as a basic operation, falling short of counting the number of e-bits necessary.
Note that they consider both the \emph{return case} and \emph{no-return case}, which are equivalent to the QFT and core QFT, respectively.
Other topologies are also explored, such as a square-lattice-based system topology (each QPU being an edge) and multi-rail topologies (where each QPU is a rectangular array), though the distribution costs for these are not investigated.

\textcite{jin_optimizing_2024} similarly consider mappings of the QFT across various approximately linear topologies, though they restrict themselves to a non-distributed setting.

\subsection{Quantum Communication Complexity of the QFT}

In~\cite{nielsen_quantum_2000}, the \emph{coherent communication complexity}~$Q_0$ of a quantum operation $U$ is defined as the minimum number of qubits that must be exchanged between two partitions to perform $U$.
Since any qubit communication can also be simulated using the same number of e-bits by implementing state teleportation instead, the coherent communication complexity is a lower bound for the number of e-bits required to distribute a circuit implementing $U$.
As summarized in~\cite{tyson_operator-schmidt_2003}, \citeauthor{nielsen_quantum_2000} establishes the bounds
\begin{equation} \label{eq:nielsen-bound}
     \ceil{\log_2(\Schmidt U)} \leq Q_0(U) \leq 2\min(n_1,n_2) \;,
\end{equation}
where $\Schmidt$ is the number of non-zero Schmidt coefficients needed in the Schmidt decomposition (for a given partition $n_1,n_2$).
The lower bound relies on the Schmidt decomposition of the unitary $U$, while the upper bound is simply using teledata to send all qubits from the smaller QPU to the larger QPU, operating, and sending the qubits back.

Specific application to the QFT is summarized and generalized in~\cite{tyson_operator-schmidt_2003}, which provides a Schmidt decomposition for partitioning the $\QFT_{n_1+n_2}$ into partitions of size $n_1,n_2$ and finds that $\Schmidt(\QFT_{n_1+n_2}) = \min\left({2^{2n_1}, 2^{2n_2}}\right)$.%
\footnote{In fact, a more general result is obtained by considering arbitrary qudits instead of only qubits, but we only consider qubits in our work.}
Therefore, by Equation~\eqref{eq:nielsen-bound},
\begin{equation} \label{eq:tyson-bound}
    Q_0(\QFT_{n_1+n_2}) = 2\min(n_1,n_2) \;.
\end{equation}
To emphasize, this implies a lower bound of $2\min(n_1,n_2)$ e-bits to circuit partition $\QFT_{n_1+n_2}$ into two partitions of size $n_1,n_2$.
It is also important to note that this bound applies to the QFT operation as traditionally defined, which, when implemented as a circuit, needs many (non-virtual) SWAP gates at the end of the circuit.

The core QFT operation is substantially different.
\textcite{chen_quantum_2023} show that the entangling power of the QFT is mainly due to the ending SWAP operations: in contrast to the previous Schmidt decompositions used to show Equations~\eqref{eq:nielsen-bound} and~\eqref{eq:tyson-bound}, which have uniformly large Schmidt coefficients, the Schmidt decomposition of the core QFT in~\cite{chen_quantum_2023} gives Schmidt number $\Schmidt(\QFT_{n_1+n_2}) = \min\left(2^{n_1}, 2^{n_2}\right)$.

While \textcite{chen_quantum_2023} established their Schmidt number in the context of matrix product simulations of the QFT, they did not consider its relation to distributed quantum computing or communication complexity.
We now note that this Schmidt number can be used with the coherent communication complexity bound of Equation~\eqref{eq:nielsen-bound}, resulting in
\begin{equation} \label{eq:chen-bound}
     \min(n_1,n_2) \leq Q_0(\QFT_{n_1+n_2}) \leq 2\min(n_1,n_2) \;,
\end{equation}
where $\QFT$ in the above equation stands for the core QFT.

While the above bounds are important theoretical results, it is important to note that they are limited to the bipartite (two QPU) case.
In fact, since the Schmidt decomposition is not well-defined for multipartite systems, this method does not generalize to e-bit minimums for arbitrary partitions.

\subsection{Circuit Partitioners} \label{sec:general-circuit-partitioners}
In this paper, we focus on the QFT circuit; however, others have considered the general problem of algorithmically distributing arbitrary quantum circuits.
These algorithms, which we call \emph{circuit partitioners} but may also be considered a type of \emph{compiler} for quantum circuit distribution, analyze the structure of the quantum circuit and return an equivalent quantum circuit by using telegates and teledata to implement non-local operations.
\textcite[Table~2]{caleffi_distributed_2024} cite a few recent methods, though only a few of them use the QFT circuit for benchmarking.
In particular,~\textcite{burt_generalised_2024,burt_multilevel_2026} use the QFT circuit as a benchmark for comparing their partitioners with previous literature.
Since we will use these partitioner results as benchmarks for our analytical circuit partitioning scheme, we will now briefly describe the circuit partitioners under consideration.

First, there is \citeauthor{baker_time-sliced_2020}'s \emph{Fine-Grained Partitioner using relaxed Overall Extreme Exchange}: FGP-rOEE.
This circuit partitioner computes a qubit-to-QPU assignment from a graph partition for each circuit time slice using a variant of the Overall Extreme Exchange (OEE) algorithm~\cite{park_algorithms_1995} (itself based on the Kernighan-Lin algorithm~\cite{kernighan_efficient_1970}) with an early stopping mechanism; this relaxation gives the algorithm its name, \emph{relaxed} OEE.
Between time-slices, qubits may move from one QPU to another, which is implemented using teledata.
No other circuit distribution techniques are considered by this algorithm.

Second, there are \citeauthor{andres-martinez_distributing_2024}'s circuit partitioning workflows~\cite{andres-martinez_distributing_2024} made available in the \emph{Pytket-DQC} package~\cite{andres-martinez_cqclpytket-dqc_2025}.
The main circuit representation used is a hypergraph~\cite{andres-martinez_automated_2019}, where hyperedges consist of a distributable packet, i.e., one qubit-vertex and any number of two-qubit gates that can be packed together with that qubit as control.
This hypergraph can be further transformed to take into account gate embedding (an advanced technique from~\cite{wu_entanglement-efficient_2024} that makes it possible to pack even more gates together), and to take into account detached gates (making it possible to implement a telegate on an unrelated QPU).
The three workflows highlighted in~\cite{burt_generalised_2024,burt_multilevel_2026} are \emph{AnnealingEmbedSteinerDetach} (Pytket-AESD), \emph{PartitionEmbed} (Pytket-PE), and \emph{PartitionHeteroEmbed} (Pytket-PHE).

The Pytket-AESD workflow uses simulated annealing~\cite{kirkpatrick_optimization_1983} for hypergraph partitioning, applies vertex covering to identify possible gate embeddings, and merges remaining distributable packets with gates that act on a common qubit.   
Finally, some gate vertices are reallocated in the hypergraph to take advantage of the detached gate trick.

In contrast, both the Pytket-PE and Pytket-PHE workflows start by using the standard hypergraph partitioner \mbox{KaHyPar}~\cite{schlag_high-quality_2023} on the initial hypergraph representing the circuit, followed by a refinement pass to take into account gate embedding when possible (according to the current hypergraph partition).
The Pytket-PHE workflow has an additional step of reallocating hypergraph vertices belonging to cut hyperedges to minimize e-bit count, greedily and sequentially, until no more improvement is possible.

Note that in all three workflows, teledata is not considered as a distribution method, but the full breadth of telegate transformations is considered, which includes gate packing, gate embedding, and detached gates.

Third, there are \citeauthor{burt_generalised_2024}'s \emph{Generalized Circuit Partitioners}~\cite{burt_generalised_2024}, \emph{Simple} and \emph{Extended} variants: GCP-S and GCP-E.
The circuit representation used is that of the interaction graph extended across time, with additional (hyper)edges representing gate packing possibilities in the case of GCP\nobreakdash-E.
In both GCP-S and GCP-E, a genetic algorithm is used to generate assignments for each qubit at each time step.
With this sequence of assignments, each state-like edge crossing a partition corresponds to a single state teleportation, while each gate-like edge crossing a partition corresponds to a single gate teleportation, and each gate-like hyperedge crossing a certain number of partitions corresponds to that number of packed gate teleportations.
Note that GCP-S and GCP-E use both teledata and telegates, and additionally, GCP-E considers gate packing (but not gate embedding or detached gates).

Fourth, there is~\citeauthor{burt_multilevel_2026}'s \emph{Multilevel Framework} (MLFM) for graph partitioning~\cite{burt_multilevel_2026}.
The circuit representation is the same as in GCP-E.
However, instead of using a genetic algorithm for distributing, the MLFM method uses a specialized hypergraph partitioning method.
The hypergraph partitioning method used here is a novel one based on the Fiduccia-Mattheyses algorithm~\cite{fiduccia_linear-time_1988} extended to hypergraphs and to the multipartite case, but where the constraints and objective function are adapted to the quantum circuit partitioning problem.
Furthermore, this algorithm is augmented to be multilevel, similarly to KaHyPar~\cite{schlag_high-quality_2023}, which decreases computation time.
It is noted that the best performing variant of the MLFM algorithm is with a \emph{recursive} coarsening algorithm that merges pairs of adjacent time steps, called \emph{MLFM-R}~\cite{burt_multilevel_2026}.

\subsection{A Methodology for Calculating the Gate Fidelity of a QFT~Circuit Implementation} \label{sec:baumer-methodology}

\textcite{baumer_quantum_2024-1} present a methodology for estimating the process fidelity of the QFT+Measurement circuit.
It samples $n$ bit numbers $x := x_{n-1} \dots x_1 x_0$ as input, runs a compute-uncompute circuit on input $\ket{x}$, and measures the result to obtain an $n$ bit number $y$.
The compute-uncompute circuit in question is $\widetilde{\QFTM}_n \of \QFT_n^\dagger$ where $\QFT_n^\dagger$ is a near-ideal implementation of the $n$-qubit inverse QFT and $\widetilde{\QFTM}_n$ is a noisy implementation of the QFT+Measurement circuit.
After sampling and computing probabilities $\prob(y=x)$, an estimator for the process fidelity is computed, where the process fidelity is defined as $F = \left(\frac{1}{2^n}\sum_{x=0}^{2^n-1} \prob(y = x) \right)^2$.

The key aspect of this methodology is that the implementation of $\QFT_n^\dagger \ket{x}$ can be done efficiently and near-ideally by noting that
\begin{align} 
    \QFT_n^\dagger \ket{x_{n-1}\dots x_1 x_0} &= \left( \P(-2 \pi (0.x_0)) \H \ket{0} \right) \nonumber \\
        &\phantom{=} \tensor \left( \P(-2 \pi (0.x_1 x_0)) \H \ket{0} \right) \nonumber \\
        &\phantom{=} \tensor \dots \nonumber \\
        &\phantom{=} \tensor \left( \P(-2\pi (0.x_{n-1}\dots x_0) \H \ket{0} \right)
\end{align}
where the phase gates $\P$ can be implemented as virtual $\RZ$ gates on IBM's superconducting hardware~\cite{mckay_efficient_2017}, making them essentially ideal.
The only source of error is from the single layer of $\H$ gates, which can be done with low error.
This methodology is used to measure the fidelity of the semiclassical QFT circuit with some error mitigation schemes, but the methodology is not applied to other methods of distributing the QFT circuit.

\section{A Distributed QFT Circuit Using Gate Packing} \label{sec:content-proposed-dqft}

\subsection{Definition} \label{sec:proposed-dqft}

We propose a method to distribute the QFT circuit on $n$ qubits across $m$ QPUs of arbitrary sizes $n_1 \dots, n_m$.
Recall the QFT circuit as it is traditionally defined, such as in~\cite{nielsen_quantum_2010} or~\cite{yimsiriwattana_generalized_2004}, which is diagrammed in Fig.~\ref{fig:traditional-qft}.

A key fact to note about $\CP$ gates is that they are symmetric: $\CP(\theta) \ket{\psi}\ket{\phi} = \CP(\theta) \ket{\phi}\ket{\psi}$. 
Therefore, an equivalent diagram of the QFT circuit shown in Fig.~\ref{fig:traditional-qft} can be seen in Fig.~\ref{fig:flipped-qft}, where we flip the role of the control and target qubits for all $\RZ(\theta)$ gates.
\begin{figure*}[!t]
    \centering
    \includegraphics[width=0.7\textwidth]{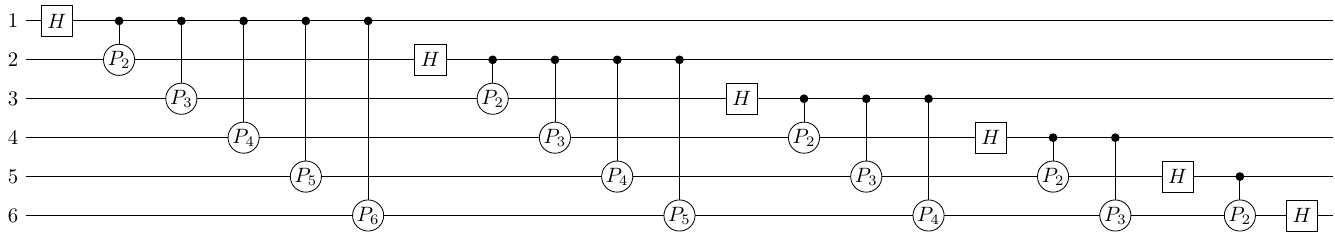}
    \caption[Alternate View of the Traditional QFT Circuit]{Alternate view of the QFT circuit, where the designation of \emph{target} and \emph{control} qubits of the $\CP$ gates are flipped.}
    \label{fig:flipped-qft}
\end{figure*}
This is the same technique that has been used to implement the semi-classical QFT, as was seen in Section~\ref{sec:other-distributed-qft}.

Now, a naive telegate-based circuit distribution scheme would still give the same e-bit requirements as~\cite{yimsiriwattana_generalized_2004}. 
However, we can use the gate packing concept introduced in~\cite{wu_entanglement-efficient_2023} to reduce the number of e-bits required.
Gate packing can be applied to any $\CP$ gates rooted on the same qubit and whose target is the same QPU.
To make this concrete, we can divide the QFT circuit into $n$ distinct time slices, where time slice $j$ is defined as 
\begin{equation}
    T_j := \CP_{j,n} \cdots \CP_{j,j+2} \CP_{j,j+1} \H_j \id_{j-1} \cdots \id_{2} \id_1 \;,
\end{equation}
where the gate subscripts indicate which qubits are acted on (and the angles are omitted).
In other words, slice $j$ consists of a Hadamard gate on qubit $j$ followed by $\CP$ gates rooted on qubit $j$ targeting all qubits $l>j$.
We designate qubit $j$ as the \emph{root qubit} for time slice $j$.
As an example, for the 6\nobreakdash-qubit QFT diagrammed in Fig.~\ref{fig:flipped-qft}, the time slice~1 consists of the first 6 gates ($\H$ and $\CP_2 \dots \CP_6$) and has root qubit~1, while time slice~6 consists of the last gate ($\H$) and has root qubit~6.
Note that with this definition of time slices, the core QFT circuit is traditionally constructed as $\QFT = T_n T_{n-1} \cdots T_2 T_1$.
By the above description, we note that after slice $j$, qubit $j$ has no more gates operating on it: we say it is \emph{inactive}.
So, to check whether a qubit~$j$ is active at time slice~$t$, we check
\begin{equation}
    \act(j) \iff \trace_{\set{1\dots n} \setminus \set{j}}(T_n \cdots T_{t+1}) = \id_j \;,
\end{equation}
where the trace subscript describes a partial trace over all qubits except qubit $j$. 
For the QFT circuit, it can be seen that $\act(j) \iff j \leq t$.
This further leads to the concept of an \emph{active} vs. \emph{inactive} QPU: a QPU $k$ is active if one of its qubits is active, and it is inactive otherwise.
That is, during time slice $t$, 
\begin{equation}
    \act(k) \iff \exists j \in \phi^{-1}(k) : \act(j) \;,
\end{equation}
where $\phi:\set{1 \dots n} \to \set{1 \dots m}$ is the qubit-to-QPU map and $\phi^{-1}$ is its (multivalued) inverse.

Now, for each time slice, we distribute the root qubit's information across all \emph{active} QPUs using starting processes, implement all non-local $\CP$ operations between the root QPU and the other active QPUs using gate packing as the $\CP$ operations share a control for each QPU, and reset the root qubit with ending processes at the end of the time slice.
We can see an example of this circuit for the 6-qubit QFT in Fig.~\ref{fig:distributed-qft}, where QPU~1 has qubit~1, QPU~2 has qubits~$\set{2,3}$, and QPU~3 has qubits~$\set{4,5,6}$.
\begin{figure*}[!t]
    \centering
    \includegraphics[width=0.7\textwidth]{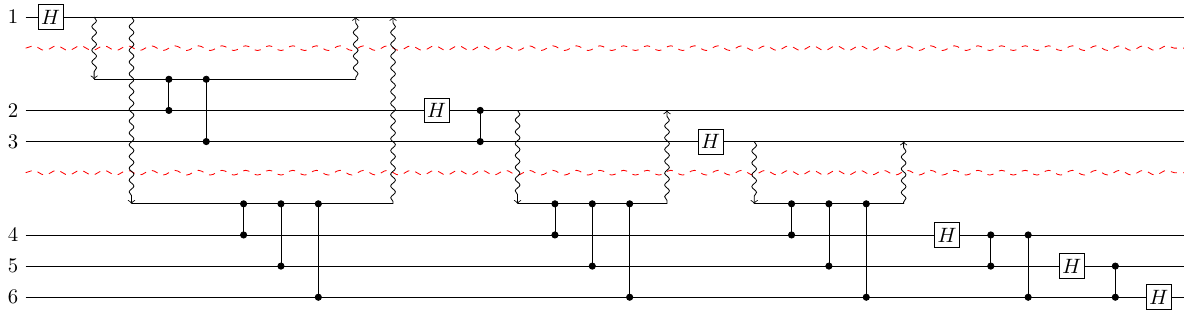}
    \caption[Gate packing Partitioned QFT Circuit Diagram]{Proposed circuit distributed for QFT that includes gate packing. Note that for simplicity, the $\CP$ gate angles are not displayed.}
    \label{fig:distributed-qft}
\end{figure*}

\subsection{Analysis of E-bit Requirements for Optimal QPU Mapping} \label{sec:analysis-new}

We start by analyzing the e-bit requirement for the gate-packed distributed QFT of Section~\ref{sec:proposed-dqft}.
Note that this proposed gate packing scheme follows the given gate ordering of the traditional QFT circuit (compare Fig.~\ref{fig:flipped-qft} with Fig.~\ref{fig:distributed-qft}).
The e-bit requirements for the circuit distribution method we proposed vary depending on the mapping of qubits to QPU $\phi : \set{1 \dots n} \to \set{1 \dots m}$ and on the available QPU sizes $n_1 \dots n_m$, but nothing else.

To show the minimal e-bit requirement according to our gate-packed QFT circuit distribution method, we need to first show a general e-bit formula for any qubit mapping under the constraints of given QPU sizes, after which we can minimize the qubit requirement given QPU sizes.

For each time slice $j$, its root qubit must distribute its state to all other active QPUs, with 1 e-bit needed per (non-root) active QPU. 
Let $a_j$ be the number of active QPUs in slice $j$, including the QPU of the root qubit. 
Then, the number of e-bits required to implement this circuit time slice non-locally is $a_j - 1$, as the root qubit's QPU is the only QPU not needing an e-bit.
Therefore, the total number of e-bits required in this distributed circuit is 
\begin{equation} \label{eq:e-bit formula active}
    E := \sum_{j=1}^{n} (a_j - 1) \;.
\end{equation}
Note that by necessity, $a_1 = m$ (all QPUs start as active) and $a_n = 1$ (all QPUs eventually become inactive, except the one containing the last qubit).
Let $\phi$ be the qubit assignment map (from logical qubit to its assigned QPU) and let $\phi^{-1}$ be its (multivalued) inverse (from QPU to its set of assigned logical qubits).
A given QPU $k$ is active until its last assigned qubit ($\max\phi^{-1}(k)$) becomes inactive: it is active for all time slices $1, \dots, \max\phi^{-1}(k)$.

We can find the optimal qubit assignments in this circuit distribution scheme:

\begin{theorem}
    Let $QPU_k$ be the $k$th smallest QPU. 
    The optimal assignment is for $QPU_1$ to be assigned the first (most significant) $n_1$ qubits, $QPU_2$ the next $n_2$ most significant qubits, etc., until $QPU_m$ is assigned the $n_m$ least significant qubits.
    This can be equivalently stated as QPU $k$ being assigned qubits $$\phi^{-1}(k) = \set{n-\sum_{j=m}^{k+1}n_{j}, \dots, n - \sum_{j=m}^{k}n_{j}+1} \;. $$
\end{theorem}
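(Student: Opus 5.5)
We rewrite $E$ as a sum over QPUs and then apply an exchange argument. QPU $k$ is active exactly in slices $1,\dots,L_k$, where $L_k := \max\phi^{-1}(k)$ is the index of its last assigned qubit. Hence
\begin{equation*}
    \sum_{j=1}^{n} a_j = \sum_{k=1}^{m} L_k \;,
\end{equation*}
and Equation~\eqref{eq:e-bit formula active} becomes
\begin{equation*}
    E = \sum_{k=1}^{m} L_k - n \;.
\end{equation*}
Minimizing $E$ is therefore the same as minimizing the sum of the last-qubit indices $L_k$ over all assignments $\phi$ that respect the capacities $n_1,\dots,n_m$. The rest of the assignment is irrelevant to the cost. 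This reformulation is the first step, and it is essentially bookkeeping.

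Next we show that an optimal assignment can be taken to be \emph{contiguous}, meaning each QPU receives a block of consecutive qubits. Order the QPUs by their last indices, $L_{\sigma(1)}<\dots<L_{\sigma(m)}$. The largest value $L_{\sigma(m)}$ must equal $n$. Consider QPU $\sigma(i)$. All of its qubits, together with all qubits of $\sigma(1),\dots,\sigma(i-1)$, lie in $\set{1,\dots,L_{\sigma(i)}}$. This gives
\begin{equation*}
    L_{\sigma(i)} \ge \sum_{l\le i} n_{\sigma(l)} \;.
\end{equation*}
The bound is attained for every $i$ simultaneously by the contiguous assignment that hands out consecutive blocks in the order $\sigma(1),\dots,\sigma(m)$. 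So for a fixed ordering $\sigma$ the minimum is exactly the sum of prefix sums,
\begin{equation*}
    \sum_{i=1}^{m}\sum_{l\le i} n_{\sigma(l)} = \sum_{i=1}^{m} (m-i+1)\, n_{\sigma(i)} \;.
\end{equation*}
This lower-bound-plus-attainment step is the main obstacle. We need to check that the inequality uses only the capacity constraints and the distinctness of the qubits.

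Finally we optimize over $\sigma$. The coefficient $m-i+1$ decreases in $i$. By the rearrangement inequality, or equivalently by an adjacent-swap argument, the sum $\sum_i (m-i+1)\, n_{\sigma(i)}$ is minimized when the $n_{\sigma(i)}$ are sorted in increasing order. Concretely, if $n_{\sigma(i)}>n_{\sigma(i+1)}$, swapping the two changes the cost by $n_{\sigma(i+1)}-n_{\sigma(i)}<0$. Hence the smallest QPU takes the first block $\set{1,\dots,n_1}$ of most significant qubits, the next smallest takes the next block, and so on, with the largest QPU ending at qubit $n$. This is the claimed assignment; the closed-form index set in the statement follows by writing each block's endpoints as prefix sums. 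Ties in QPU size give equal cost, so optimality is up to permuting equal-sized QPUs. As a by-product we get the optimal value
\begin{equation*}
    E_{\min}=\sum_{k=1}^{m}(m-k)\,n_k \;,
\end{equation*}
with the $n_k$ sorted in increasing order.
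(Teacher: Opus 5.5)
Your proof is correct and takes essentially the same route as the paper's: your reformulation $E=\sum_k L_k - n$ is the paper's lemma $E=\sum_{k=1}^{m-1}\lastactive(k)$ once QPUs are ordered by last active qubit, and your prefix-sum bound $L_{\sigma(i)}\ge\sum_{l\le i}n_{\sigma(l)}$ is exactly the paper's bound $\max\Rem_k\ge\abs{\Rem_k}$, attained simultaneously by nested initial segments. Your final step of sorting the sizes also matches the paper, and your adjacent-swap argument and closed form $\sum_k (m-k)n_k$ state it more explicitly and cleanly than the paper's corresponding computation.
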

The proof of this theorem is in the appendix.

Using this result with Equation~\eqref{eq:e-bit formula active}, we get an e-bit count 
\begin{equation} \label{eq:optimal-e-bits}
    E = \sum_{k=1}^m n_k(m-k) \;,
\end{equation}
where $n_k$ are in order of smallest to largest.
Additionally, if we relax this QPU mapping to consider arbitrary QPU orders but still mapping qubits to QPUs sequentially, we obtain the same formula~\eqref{eq:optimal-e-bits}.

Importantly, for the case $m=2$, we get an important result in coherent communication complexity. 
The above formula gives $E = n_1$ where $n_1$ is chosen to be the smallest, tightening the upper bound on the coherent communication complexity of the core QFT and proving:
\begin{theorem}
    The coherent communication complexity of the core QFT (QFT without ending SWAPs) is 
    \begin{equation*}
        Q_0(\QFT_{n_1+n_2}) = \min(n_1,n_2) \;.
    \end{equation*}
\end{theorem}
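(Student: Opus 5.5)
The plan is a matching lower and upper bound on $Q_0$ for the core QFT, with the lower bound quoted from earlier in the paper and the upper bound given by the gate-packed construction.

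\emph{Lower bound.} The Schmidt number of the core QFT across a bipartition of sizes $n_1,n_2$ is $\min(2^{n_1},2^{n_2})$, as shown in~\cite{chen_quantum_2023}. Feeding this into the left inequality of Equation~\eqref{eq:nielsen-bound} gives $Q_0(\QFT_{n_1+n_2}) \geq \min(n_1,n_2)$, which is the left side of Equation~\eqref{eq:chen-bound}. Nothing new is needed for this direction.

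\emph{Upper bound.} I would use the gate-packed distributed circuit of Section~\ref{sec:proposed-dqft} with $m=2$, relabelling so that $n_1 \leq n_2$. Following the optimal-assignment theorem, QPU~1 holds qubits $1,\dots,n_1$ and QPU~2 holds qubits $n_1+1,\dots,n$. The e-bit count then follows directly from Equation~\eqref{eq:e-bit formula active}.
\begin{itemize}
\item For slices $j \leq n_1$, both QPUs are active: QPU~2 holds qubits $l>n_1\geq j$, and QPU~1 holds qubit $j$. Hence $a_j=2$.
\item For slices $j>n_1$, all of QPU~1's qubits are inactive, so $a_j=1$.
\end{itemize}
Therefore $E=n_1$, in agreement with Equation~\eqref{eq:optimal-e-bits}, which gives $n_1(2-1)+n_2(2-2)=n_1$. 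Note that this upper bound only needs the construction to be valid for this one assignment, not the optimality theorem itself.

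\emph{Turning e-bits into communicated qubits.} $Q_0$ counts exchanged qubits, so I must still show that $E$ e-bits plus classical communication cost at most $E$ qubits of communication. I would argue that each e-bit can be created by preparing $\ket{\Phi^+}$ locally and sending one half across. This costs one qubit of communication per e-bit.

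The main obstacle is the classical messages. The starting and ending processes send classical bits (the measurement outcome $r$) across the cut, and $Q_0$ as defined in~\cite{nielsen_quantum_2000} counts qubit exchange only. My plan is to invoke the deferred measurement principle: replace each measurement and classically controlled $\X^r$ or $\Z^r$ by a coherent controlled gate. That, however, would appear to need the communication qubit itself to cross the cut.

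I would instead look more closely at the telegate structure. The starting process is equivalent to sending the root qubit's value coherently via a fanout onto the remote ancilla. The ending process corresponds to returning that ancilla, or uncomputing it with a remote $\CX$, and it is unclear whether this stays within one qubit of communication per packet.

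I would argue that, under Nielsen's definition, which permits free local operations and counts each direction of communication, a telegate packet is simulated by a coherent $\CX$ fanout, the packed gates, and a coherent uncompute. If that costs two qubits per packet, the bound doubles. If that happens, my fallback is to point out that the paper treats e-bits with free classical communication as the resource, and in that model $E=n_1$ meets the lower bound, because the Schmidt-number lower bound also holds for entanglement-assisted LOCC implementations. That gives equality in the e-bit sense, $\min(n_1,n_2)$.
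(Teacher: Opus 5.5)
Your argument is the paper's own. The lower bound is Chen et al.'s Schmidt number $\min(2^{n_1},2^{n_2})$ inserted into Nielsen's bound. The upper bound is the $m=2$ gate-packed circuit with $E=n_1$, where the paper simply counts e-bits (with classical communication free) as the communication cost, i.e.\ exactly your fallback reading.

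Your worry about a strict qubit-only model is justified and is not a defect of your proof. For $n_1=n_2=1$ the only non-local gate is $\CP(\pi/2)$, which signals in both directions, so one qubit of one-way communication cannot implement it and the strict cost is $2$. Your remark that the Schmidt-rank bound also holds for entanglement-assisted LOCC is the justification the lower bound actually needs in that model, since the teleportation argument only shows that the e-bit count is at most $Q_0$.
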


On a final note, we can state that for evenly-sized QPUs $n_1 = \dots = n_m = \frac{n}{m}$, the gate-packed QFT e-bit requirements formula~\eqref{eq:optimal-e-bits} simplifies to
\begin{equation} \label{eq:optimal-e-bits-even}
    E = \frac{1}{2}n(m-1) \;.
\end{equation}

\subsection{Lower Bound on Any Gate-Packing-Based Distributed QFT}

The traditional QFT circuit, shown in Fig.~\ref{fig:traditional-qft}, has a very specific ordering of gates.
It is this ordering of gates that the distributed QFT algorithm of Section~\ref{sec:proposed-dqft} exploits to get a minimum e-bit count.
However, there are many equivalent circuits which can be obtained by commuting gates: some $\CZ$ and $\H$ gates are defined on different qubits and so trivially commute, and any two $\CP$ gates also commute.
Furthermore, other basic gate-product equivalencies can be established, for example, $(\id \tensor \H) \CRZ(\theta) = \CRX(\theta) (\id \tensor \H)$,%
\footnote{Note $\CRZ$ is closely related to $\CP$, differing only by a phase on the controlled unitary $\RZ$}
 expanding the number of equivalent circuits which can represent the QFT.
Note that including ancilla qubits in our circuit would again expand the number of equivalent circuits for the QFT, but this is not considered in the present work.

Now, assume we have some circuit on n qubits, equivalent to the n-qubit core QFT, obtained by applying commutation rules and gate-product equivalencies.
Note that while the specific gates in the circuit may change drastically, these equivalencies do not change the qubit interactions: single-qubit gates remain single-qubit gates and two-qubit gates remain two-qubit gates.%
\footnote{We ignore the trivial possibility of a two-qubit gate being followed by its inverse.}
Therefore, we can find a lower bound for the e-bits necessary to implement the distributed QFT by abstracting away the specific gates in the circuit and only considering the interactions due to two-qubit gates.
This concept gives rise to the definition of an \emph{interaction graph} for a given circuit: a graph such that vertices $V=Q$ are the qubits, and two vertices $(i, j)$ are connected by an edge iff there is a two-qubit gate between them in the circuit.
For our purposes, since we only care about non-local gates, we can define the \emph{non-local interaction graph} as the subgraph of the interaction graph obtained by removing edges corresponding to local gates.
In other words, the non-local interaction graph is such that vertices are qubits and two vertices $i,j$ share an edge iff there is a non-local two-qubit gate between them.
By inspecting the traditional QFT circuit seen in Fig.~\ref{fig:traditional-qft}, we note the interaction graph of the QFT circuit is given by the complete graph $K_n$; for distributing into QPUs of sizes $n_1, \dots, n_m$, the non-local interaction graph is given by the complete $m$-partite graph $K_{n_1,\dots,n_m}$.
We see examples of such graphs in Fig.~\ref{fig:interaction-graphs}.

To distribute such a circuit using gate teleportations and packed gate teleportations, we need to decide for each two-qubit gate which qubit is considered the root qubit and which is considered the target qubit.
It is a necessary condition for packing that two gates have the same root qubit.

\begin{figure}[h]
    \centering
    \subfloat[]{\includegraphics[width=0.4\linewidth]{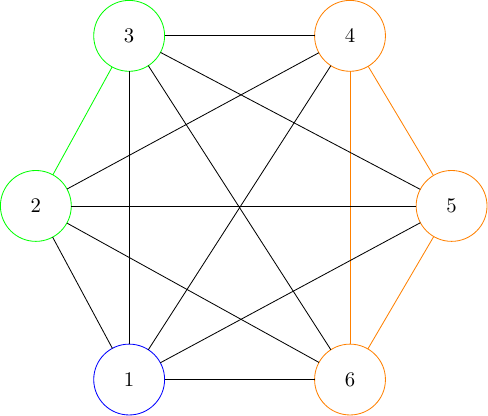} \label{fig:connectivity}}
    \hfil
    \subfloat[]{\includegraphics[width=0.4\linewidth]{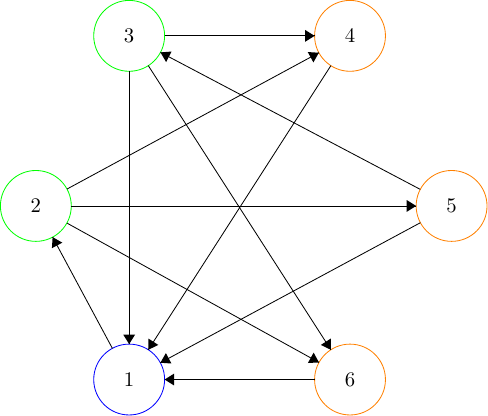} \label{fig:non-local-connectivity-directed}}
    \\
    \subfloat[]{\includegraphics[width=0.8\linewidth]{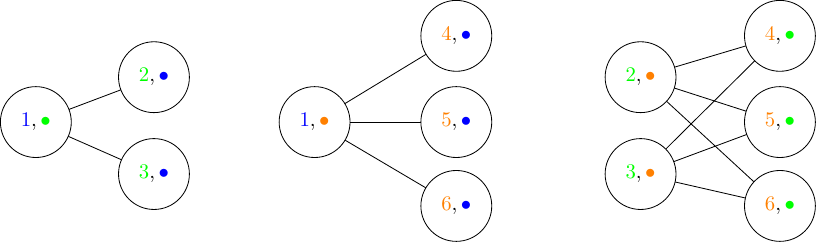} \label{fig:constraints}}
    \caption{
        (a) $K_6$, the interaction graph for a 6-qubit QFT. Node colours correspond to QPU assignment, but are irrelevant for this (local) interaction graph. 
        (b) $K_{1,2,3}$, the non-local interaction graph for a 6-qubit distributed QFT on QPUs of sizes $1,2,3$. 
        An arbitrary assignment of root and target qubits is given by edge directions. 
        (c) $K_{1,2} \cup K_{1,3} \cup K_{2,3}$, the constraint graph for a 6-qubit distributed QFT on QPUs of sizes $1,2,3$.
        Note we identify the QPUs by their colour here (as opposed to the main text, where we use their indices $k$). 
    }
    \label{fig:interaction-graphs}
\end{figure}

We can describe an arbitrary choice of root and target for each of the edges in the QFT's non-local interaction graph by making it a directed graph, where edges are directed from root qubit to target qubit.
Now, for a given directed non-local interaction graph, we need to determine how many e-bits are necessary for a given choice of roots and QPU assignment. 
Any group of gates with the same root qubit and targeting qubits on a common QPU will only need a single e-bit to distribute due to gate packing. 
In contrast, if there is no gate rooted at $i$ targeting some QPU $k$, then no e-bits are needed for that case.
To determine the minimum number of e-bits required given a non-local interaction graph, we must maximize the pairs of qubit $i$ and QPU $k$ for which there are no non-local gates from qubit $i$ to a qubit in $k$.
 
Enforcing no non-local gate rooted at qubit $i$ and targeting QPU $k$ for the QFT's non-local interaction graph means letting all edges between $i$ and each $j \in \phi^{-1}(k)$ target $i$.
However, by symmetry, this implies that there are non-local gates rooted on each $j$ and targeting QPU $\phi(i)$.
So, each choice of qubit $i$ and QPU $k$ for which we want to enforce no non-local gate from $i$ to $k$ enforces a constraint on the choice of other qubits and QPUs for which we cannot do the same.
For example, for the root-target choices of Fig.~\ref{fig:non-local-connectivity-directed}, we have that there is no e-bit needed to distribute any gates from the root qubit~1 to the orange QPU.
However, this implies that each qubit of the orange QPU has a gate targeting the blue QPU, and so an e-bit is necessary to distribute each of those three non-local gates.

The set of constraints described above can be encoded into a graph such that vertices are root-qubit-to-target-QPU pairs and there is an edge between $(i,k)$ and $(i',k')$ iff we cannot enforce both no non-local gates between $i$ and $k$ and no non-local gates between $i'$ and $k'$.
This happens when root $i$ is among the qubits $\phi^{-1}(k')$ of target QPU $k'$ and when $i'$ is in $\phi^{-1}(k)$.
As an example in Fig~\ref{fig:non-local-connectivity-directed}, choosing no e-bits to distribute from root qubit~1 targeting the orange QPU implies that there must be an e-bit to distribute from root qubit~4 targeting the blue QPU, i.e., the constraint graph contains an edge between $(1,\text{orange})$ and $(4,\text{blue})$.
The same constraints hold for qubits~4 and~5.
Applying the idea of a constraint graph to the QFT circuit, this gives a graph with components
\begin{equation} \label{eq:qft-constraints}
    K_{n_{k},n_{k'}} \;\text{for all distinct QPU pairs $k \neq k'$} \;,
\end{equation}
where $K_{n_{k},\,n_{k'}}$ are the complete bipartite graphs with one part having $n_k$ vertices and the other part having $n_{k'}$ vertices. 
An example can be seen in Fig.~\ref{fig:constraints}.

In this context, an optimal circuit distribution is a maximum independent set in the constraint graph: each vertex in the maximal independent set is a choice of qubit-root-to-QPU-target pair that won't use an e-bit, where independence implies that the constraint is satisfied.
Dually, we have a minimum vertex cover: each qubit-root-to-QPU-target vertex in the cover is a packing process, and the edges each covers are the two-qubit gates packed that use that e-bit.
Therefore, the size of the minimum vertex cover is exactly the minimum number of e-bits necessary to partition the circuit, according to its connectivity.
Looking back to the constraint graph for QFT, highlighted in Equation~\eqref{eq:qft-constraints}, the minimum vertex cover is  
\begin{equation} \label{eq:qft-vertex-cover}
    \sum_{k \neq k'} \min(n_k, n_k')
\end{equation}
as each component is independent.
This can easily be proven using the relation between the vertex cover number and the independence number~\cite{weisstein_independence_nodate}.
Therefore, when the QPUs are indexed by size (without loss of generality), the number of e-bits required for the QFT implemented by gate packing is lower bounded by
\begin{equation}
    \sum_{k \neq k'} \min(n_k, n_k') = \sum_{k=1}^{m}\sum_{k'=k+1}^{m} n_k = \sum_{k=1}^{m-1} n_k (m-k) \;,
\end{equation}
which is tight as it matches the e-bit count of our method as stated in Equation~\eqref{eq:optimal-e-bits}.%
\footnote{Remark: this proof method was independently conceived, but bears a striking similarity to the circuit partitioning algorithm of \citeauthor{g_sundaram_efficient_2021}~\cite{g_sundaram_efficient_2021}, also used in the vertex partitioning refinement step of the Pytket-DQC workflows~\cite{andres-martinez_distributing_2024} as described in Section~\ref{sec:general-circuit-partitioners}.
However, this proof method is done by generalizing over the gate types to establish a lower bound, while the partitioning algorithm considers the specific set of gates used in the circuit to partition and produces a valid partitioned circuit.}

\section{Analysis of e-bit Requirements of Previous Distributed QFTs} \label{sec:analysis-previous}

The previously established partitions of the QFT circuit, described in Section~\ref{sec:other-distributed-qft}, are typically analyzed only in terms of their asymptotic complexity rather than via exact expressions, and are usually described in the setting of equal-sized QPUs.
In this section, we generalize the analysis of e-bit requirements for those previous partitioning schemes and compare the results to the gate-packed QFT.

\subsection{The Naive Distributed QFT} \label{sec:comparison-yim}
The first analytical procedure for specifically distributing the QFT circuit is from~\cite{yimsiriwattana_generalized_2004}, which we call the naive distributed QFT.
To compare its e-bit count with our gate-packed QFT, we need to generalize their analysis, which was done only informally and restricted to equal QPU sizes.
Here we follow a similar argument as~\cite{yimsiriwattana_generalized_2004} but described more formally and applied to the case of distinct QPU sizes $\set{n_1, \dots, n_m}$.

The $n$ qubit QFT circuit consists of $n(n-1)/2$ two-qubit gates~\cite{nielsen_quantum_2010}.
Note from~\cite{cleve_fast_2000,camps_quantum_2021} that the QFT circuit can be decomposed recursively: for a bipartition $A, B$ of the set of qubits $A \cup B = \set{1,\dots,n}$, we can determine that
\begin{equation*}
    \QFT_{A \cup B} = (V_{A \cup B}) (\QFT_B \tensor \id_A) (U_{A \cup B}) (\id \tensor \QFT_A) \;,
\end{equation*}
where $\QFT_A$ and $\QFT_B$ are smaller local QFT operations and $U,V$ are the left-over operations acting between $A$ and $B$ (composed of some $\CP$ gates only).
This decomposition can be applied recursively so that any partition of the $n$ qubits into QPUs $1, \dots, m$ gives rise to a decomposition
\begin{equation*}
    \QFT_{} = U_m (\id \tensor \QFT_m) \dots U_1 (\id \tensor \QFT_1) \;,
\end{equation*}
where $\QFT_k$ is an $n_k$-qubit QFT acting locally on QPU~$k$ and the operators $U_1, \dots, U_m$ are non-local operations composed of $\CP$ gates only.
For such a partition into QPUs, each $\QFT_k$ operates on $n_k$ qubits and therefore consists of $n_k(n_k-1)/2$ two-qubit gates, while the remaining two-qubit gates are between each qubit and all qubits of the other QPUs and therefore consist of the $\sum_{k=1}^{m}n_k \sum_{j=k+1}^m n_j$ leftover two-qubit gates.
Since the naive method uses 1 e-bit per non-local gate, this gives a total e-bit requirement of
\begin{equation} \label{eq:yim-count}
    E = \sum_{k=1}^{m-1}n_k \sum_{j=k+1}^m n_j \;.
\end{equation}
For the specific case of equal-sized quantum processing units (\mbox{$n_1 = \dots = n_m = \frac{n}{m}$}), Equation~\eqref{eq:yim-count} simplifies to
\begin{align*} \label{eq:yim-count-equal}
    E &= \frac{n}{2m}n(m-1) \;.
\end{align*}

\begin{figure*}[!t]
    \centering
    \includegraphics[width=0.5\textwidth]{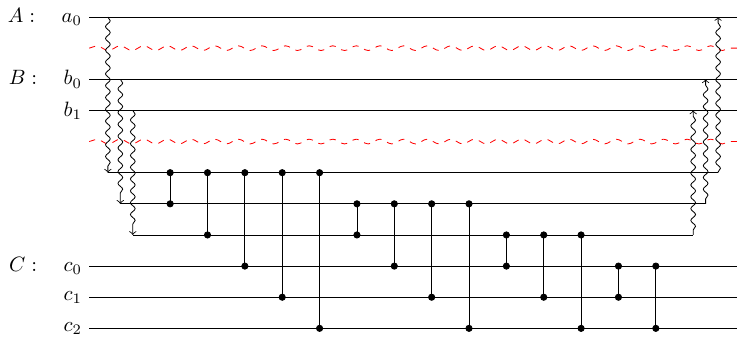}
    \caption[Detached Gate QFT Circuit Diagram]{A 6-qubit (simplified) QFT circuit implemented using gate packing and a maximum number of detached gates.}
    \label{fig:detached-qft}
\end{figure*}

\begin{table*}[t]
    \caption{Comparison of QPU interconnect methods and resource overhead.}
    \label{tab:qpu_comparison}
    \centering
    \renewcommand{\arraystretch}{1.3}
    \begin{tabular}{@{}llcc@{}}
        \toprule
        \textbf{Method} & \textbf{E-bits (any size QPUs)} & \textbf{E-bits (equal size QPUs)} & \textbf{Max comm. ancillas per QPU} \\ 
        \midrule
        Naive & $\sum_{k=1}^{m-1} n_k \sum_{j=k+1}^m n_j$ & $\frac{n}{2m} n(m-1)$ & 1 \\
        Linear topology & $\sum_{k=1}^{m-1} n_k \cdot 2(m-k)$ & $\phantom{\frac{n}{2m}}n(m-1)$ & 2 \\
        \textbf{Gate-packed (ours)} & $\mathbf{\sum_{k=1}^{m-1} n_k \cdot (m-k)}$ & $\mathbf{\phantom{\frac{1}{m}}\frac{1}{2} n(m-1)}$ & \textbf{1} \\
        Detached gates & $\sum_{k=1}^{m-1} n_k$ & $\phantom{\frac{1}{2}}\frac{1}{m} n(m-1)$ & $n - n_k$ \\ 
        \bottomrule
    \end{tabular}
\end{table*}

\subsection{The Linear-Topology-Based Distributed QFT} 

The linear topology distributed QFT~\cite{rodney_van_meter_communications_2004} has only been studied for equal QPU sizes; here, we will proceed with an alternative analysis that can be applied to arbitrary QPU sizes.

First, we examine the QFT as it has been mapped to the linear nearest-neighbour~(LNN) topology without distribution. 
This appears as the base case in~\cite{jin_optimizing_2024}, and such a circuit is shown in Figure~\ref{fig:linear-qft}.
Next, we partition the circuit from bottom to top into QPUs of size $n_1, \dots, n_m$.
Then, we note that the $\CP$ gates can be packed with their adjacent $\SWAP$ gates.
This implies we only need to count the $\SWAP$ gates to know the number of e-bits required.
Since between qubits $j$ and $j+1$ (counting starting with qubit~1 at the bottom) there are $j$ $\SWAP$ gates, the total number of $\SWAP$ gates is 
\begin{equation*}
    \sum_{k=1}^{m-1} \sum_{k'=1}^{k} {n_{k'}} =  \sum_{k=1}^{m-1} {n_{k}} (m-k) \;,
\end{equation*}
which is equal to the e-bit count for the gate-packed QFT partitioning scheme.
Since each non-local $\SWAP$-followed-by-$\CP$ operation needs 2 e-bits to distribute,%
\footnote{The $\SWAP$ gate takes 2~e-bits to distribute. This can be done either with the telegate protocol by using an embedding process (as shown in \cite[Fig.~3]{wu_entanglement-efficient_2023}) while gate packing the $\CP$ gate, or with the teledata protocol by using bidirectional state teleportation while implementing the $\CP$ locally between the two state teleportations (as shown in Fig.~\ref{fig:linear-qft-distributed}).} %
the number of e-bits required is twice that of the gate-packed QFT highlighted in Equation~\eqref{eq:optimal-e-bits}:
\begin{equation} \label{eq:linear-general-e-bits}
    E = 2 \sum_{k=1}^{m-1}  n_k (m-k) \;.
\end{equation}

\subsection{The Detached Gate Distributed QFT} \label{sec:detached-gate-qft}

There is a final version of the distributed QFT that is considered in this paper, based on empirical results presented later in Section~\ref{sec:content-empirical}.
The idea is that detached gates may increase the number of gate packing possible.
However, the number of ancillas needed is unbounded and, therefore, unrealistic.
A simple analysis of the extreme use of detached gates is conducted here, which serves as a lower bound for detached gate use.

The maximal use of detached gates is for all qubits to be distributed to a single QPU, all gates to be conducted there, and all qubits to be restored back to their original registers.
Therefore, the e-bit requirement is $n - n_k$, where $k$ is the QPU on which the gates are being performed.
Alternatively, this can be written 
\begin{equation}
    E = \sum_{k'\in[1,\dots,m]\setminus k} n_{k'} \;,
\end{equation}
which for equal-sized QPUs becomes
\begin{equation}
    E = \frac{1}{m} n(m-1) \;.
\end{equation}

\subsection{Summary of e-bit Requirements for the Various QFT Partition Schemes}

We summarize the e-bit requirements in Sections~\ref{sec:analysis-new} and~\ref{sec:analysis-previous} for any $m$ QPU sizes $n_1, \dots, n_m$ and for $m$ equal-sized QPUs of sizes $\frac{n}{m}$ in Table~\ref{tab:qpu_comparison}, as well as the extra ancillas needed to hold the e-bit halves during distribution, called \emph{communication ancillas}.
Note that when scaling the distributed quantum system with a constant number of communication ancillas per QPU, our gate-packed distributed QFT has a constant factor improvement of $1/2$ over the next best method (linear topology) and a quadratic improvement over the naive distributed QFT (for a fixed number of QPUs).
Furthermore, we note two special cases.
For two QPUs ($m=2$), our gate-packed distributed QFT achieves the same e-bit count as the extreme case of the detached gate distributed QFT, which we know is optimal by the coherent communication complexity bound~\eqref{eq:chen-bound}.
In the special case of $n$ single-qubit QPUs ($m=n$), the naive distributed QFT achieves its best e-bit count, which is equal to the e-bit count for our gate-packed distributed QFT.

\section{Empirically Determined e-bit Counts by Circuit Partitioners} \label{sec:content-empirical}

\begin{figure*}[!t]
    \centering
    \subfloat{\includegraphics[width=0.48\textwidth]{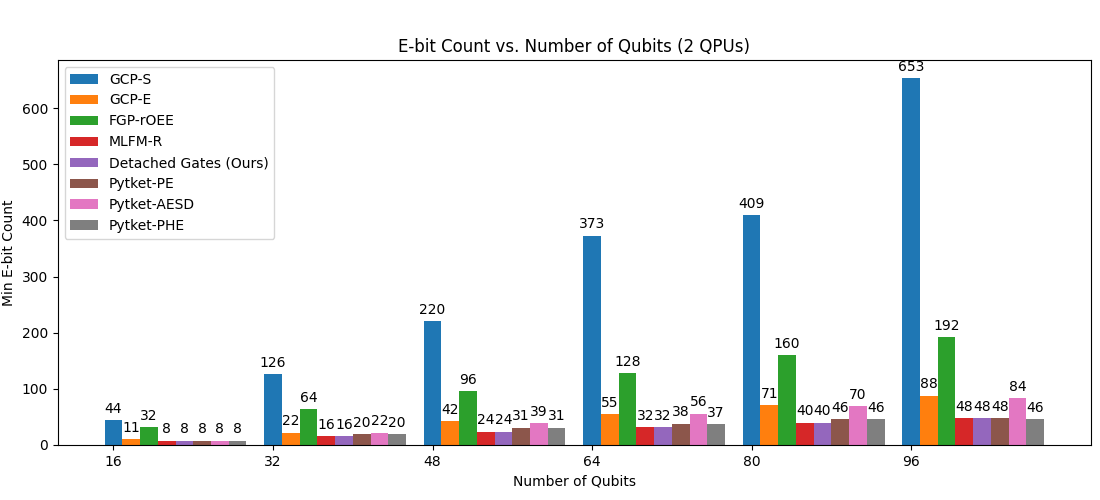}}
    \hfil
    \subfloat{\includegraphics[width=0.48\textwidth]{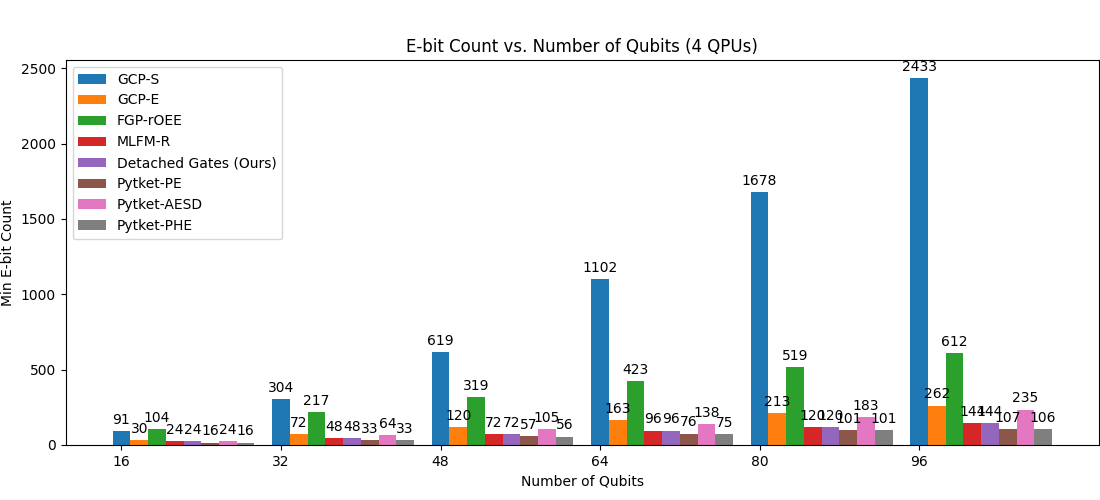}}
    \caption[Circuit Partitioners' Resulting e-bit Counts for QFT Partitioning]{Typical e-bit counts for partitioning various QFT sizes into 2 and 4 QPUs.}%
    \label{fig:few-QPUs-e-bits}
\end{figure*}

\begin{figure*}[!t]
    \centering
    \subfloat{\includegraphics[width=0.48\textwidth]{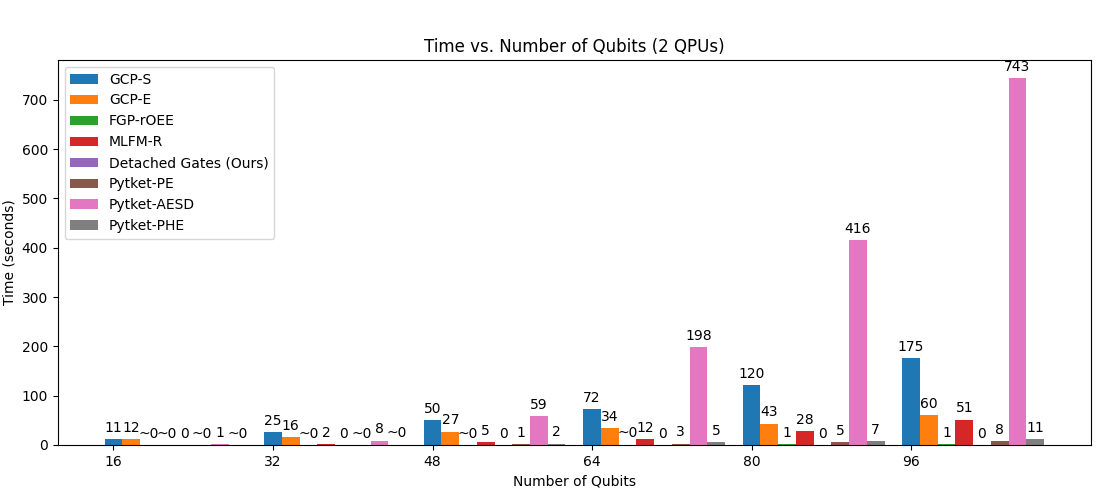}}
    \hfil
    \subfloat{\includegraphics[width=0.48\textwidth]{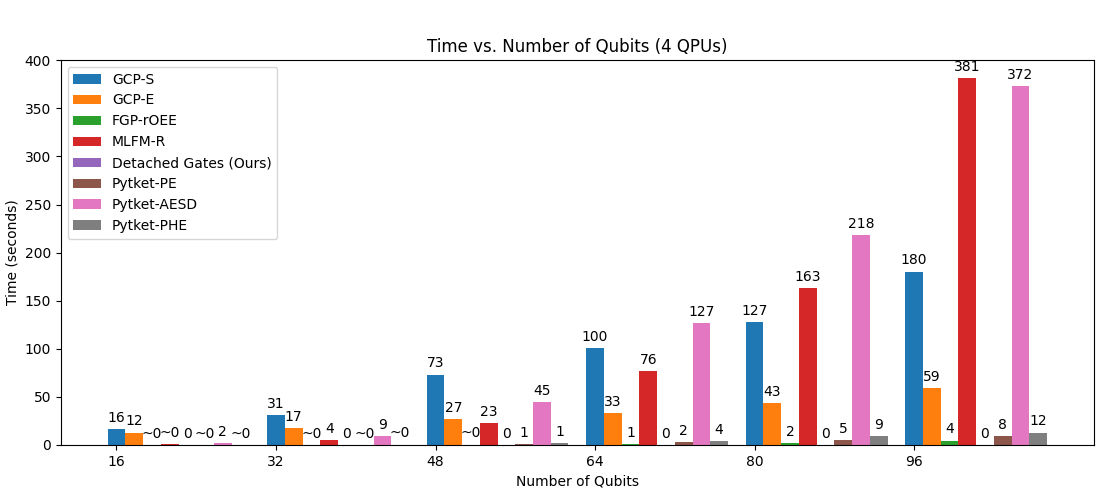}}
    \caption[Circuit Partitioners' Runtime for QFT Partitioning]{Typical time for partitioning various QFT sizes into 2 and 4 QPUs on a laptop.}
    \label{fig:few-QPUs-time}
\end{figure*}

In this section, we perform a comparative study of circuit partitioners to find evidence for or against the optimality of the gate-packed QFT.
First, we see how the gate-packing-based QFT partitioning scheme of Section~\ref{sec:proposed-dqft} compares with the partitioning results from the circuit partitioners in the literature (see Section~\ref{sec:general-circuit-partitioners}) in terms of e-bits required to implement the partitioned circuit.
Later, we discuss the reasons for the outlier results.

Many of the circuit partitioning algorithms of Section~\ref{sec:general-circuit-partitioners} have already been implemented by \citeauthor{burt_multilevel_2026} and generously shared online in the DisQCO repository~\cite{burt_disqco_2025}.%
\footnote{See online at \href{https://github.com/felix-burt/DISQCO}{github.com/felix-burt/DISQCO}~\cite{burt_disqco_2025}.}
Note that this code repository was still a work-in-progress during the course of this paper.
For the purposes of this paper, the DisQCO repository was forked, the Pytket-DQC workflows were integrated into the code base, and a main.py file was added where all partitioners were run and their results recorded.%
\footnote{See online at \href{https://github.com/ZacharyVernec/DISQCO/tree/zacharyvernec}{github.com/ZacharyVernec/DISQCO/tree/zacharyvernec}~\cite{vernec_fork_2025}.}

\subsection{Methodology} \label{sec:empirical-methodology}

The methodology generally follows \citeauthor{burt_multilevel_2026}~\cite[Fig. 28]{burt_multilevel_2026} but includes more circuit partitioners: all circuit partitioners along with all their variants (for GCP) and all their workflows (for Pytket) described in Section~\ref{sec:general-circuit-partitioners} are included in the methodology. 
The partitioners are run on $\QFT$ circuits of sizes from 16 to 96 (in steps of 16) and configured to partition into either 2 or 4 equal-sized QPUs.
The network topology is assumed to be fully connected.
The metrics measured for each partition run are (1) the number of e-bits required to implement the circuit partition found by the partitioner and (2) the time it takes to initialize and run the circuit partitioner on a laptop.%
\footnote{A laptop with a 12th Gen Intel(R) Core(TM) i5-1235U (1.30 GHz) and 16.0 GB RAM running the code in Ubuntu 22.04 LTS on WSL-2 (Windows Subsystem for Linux).}%
\footnote{Note that the time taken to create the QFT circuit object and the time taken to create a network object representing the distributed quantum system are not included.}

The FGP-rOEE partitioner is run using default arguments.
In particular, the single-qubit gates are not removed from the circuit before partitioning, as would be done if the partitioned circuit were to be saved for later execution.
The GCP variants (GCP-S and GCP-E) are run with the default arguments in DisQCO~\cite{burt_disqco_2025}, which means the genetic algorithm uses a population size of 100 partitions and runs for 100 generations with a mutation rate of 0.9.
The MLFM-R partitioner is also run with default arguments, in particular limiting each partitioning pass to implement at most $n$ state teleportations (where $n$ is the number of qubits in the QFT circuit).
The Pytket workflows~(PE, AESD, and PHE) are similarly run with default arguments, though we note that the network object used to define the distributed quantum system is a NISQNetwork, which allows us to set a limit of communication ancillas per QPU.
This limit is set to 1 communication ancilla per QPU.

The QFT circuit is defined in two different ways, depending on the Python module used for implementing the circuit partitioner.
FGP, GCP, MLFM are all included in DisQCO, which is based on IBM's Qiskit, while Pytket workflows are included in Pytket-DQC, which is based on Quantinuum's Pytket.
For Qiskit-based partitioners, the QPU circuit is first defined as an abstract circuit using the $\QFT$ gate in Qiskit's circuit library, then decomposed into $\U(3)$ and $\CP(\theta)$ gates which gives a circuit equivalent to the traditional circuit (see Fig.~\ref{fig:traditional-qft}) but with the $H$ gates represented as an appropriately parametrized $U(3)$ gate.
This transpilation is a requirement to use the partitioners as implemented in DisQCO.
For Pytket-based partitioners, the QPU circuit is directly defined as the sequence of gates of the traditional QFT (see Fig.~\ref{fig:traditional-qft}), which is already the appropriate set of gates to run the Pytket partitioning workflows (which accept only circuits made of $\H$, $\CP$, and $\RZ$ gates).

\subsection{Comparison of e-bit Counts from Our Scheme vs. from Circuit Partitioners} \label{sec:empirical-results}

E-bit counts and runtimes when partitioning into few QPUs are shown in Fig.~\ref{fig:few-QPUs-e-bits} and Fig.~\ref{fig:few-QPUs-time}, respectively.

First, we analyze the two-QPU case.
We notice that the \mbox{GCP-S} partitioner results in e-bit counts far exceeding the results of other partitioners.
GCP-S is a telegate-based partitioner which does not consider gate packing, lending credence to the idea that gate packing is a requirement for optimal partitioning of the QFT circuit.
Second, we notice that the FGP-rOEE partitioner obtains e-bit counts with an overhead factor of between $2\times$ and $3\times$ compared to more recent partitioners.
FGP-rOEE is a teledata-based partitioner, suggesting that telegates are a better fit for partitioning the QFT circuit.
Third, we notice that our detached gate partitioned QFT gives a lower bound on e-bit counts compared to all other partitioners, except for the
MLFM-R partitioner which obtains equal e-bit counts, and with the exception of Pytket-PHE on a 96\dash{}qubit QFT circuit.
This last exception is due to the Pytket workflow approximating the QFT circuit, which we consider an outlier result, as will be discussed in Section~\ref{sec:outliers}.

We interpret these results as hinting at the optimality of our gate-packed QFT.
This is advantageous, since due to the fact that our gate-packing-based circuit partition is a pre-defined partitioning scheme, it can be implemented instantly (only taking time to construct the circuit object according to the scheme), while a circuit partition that results from a circuit partitioner takes time to run (in addition to the time it takes to construct the circuit object).
Indeed, as shown in Fig.~\ref{fig:few-QPUs-time}, the MLFM-R and Pytket partitioners may take a \mbox{non-negligible} amount of time to partition large QFT circuits, and are especially slow when partitioning into more QPUs.
As \citeauthor{burt_multilevel_2026}~\cite{burt_multilevel_2026} theorize, the reason MLFM-R performs as well as the gate-packed QFT is that it seems the initial attempt MLFM-R makes at partitioning, for any input circuit, just so happens to be equivalent to the gate-packed QFT.

We present similar results for a 4-QPU distributed system.
However, in this case, both the Pytket-PE and Pytket-PHE partitioners achieve much lower e-bit counts, lower than the gate-packed QFT, even for small circuit sizes.
We believe this is due to the ability of Pytket to consider detached gates at the expense of additional communication ancillas.
As all other circuit partitioning schemes and circuit partitioners are restricted to 1 communication ancilla per QPU (or 2 in the case of the linear-topology-based distributed QFT), we consider this an outlier result that we discuss in Section~\ref{sec:outliers}.

\subsection{Pytket's Outlier E-bit Counts} \label{sec:outliers}

In Section~\ref{sec:empirical-results}, we saw that for large numbers of qubits, the Pytket-PHE partitioner is able to find an e-bit count lower than any other partitioner, including our analytical gate packing partition. 
The reasons are twofold: consideration of detached gates and approximating the QFT.

In all workflows, the Pytket partitioners assume that the QPUs have an unbounded number of communication qubits, which allows them to use detached gates.
Detached gates are a powerful method for reducing e-bit requirements at the cost of an increase in communication qubits, as seen in Section~\ref{sec:detached-gate-qft}.
In the extreme case, the number of communication qubits used as ancillas may be so large that the QPU size would have been large enough to implement a non-distributed version of the QFT.
An example of this can be seen in Fig.~\ref{fig:detached-qft}.

Separately, the Pytket partitioners approximate the QFT as part of their workflow.
As seen in Fig.~\ref{fig:traditional-qft}, the QFT circuit is implemented using $\CP(\theta)$ gates with exponentially small angles $\theta=2\pi/2^{j}$.
Because of how the Pytket partitioners are implemented in Python, angles very close to multiples of $2\pi$ are rounded to the nearest multiple of $2\pi$.
These approximated $\CP$ gates are then equivalent to the identity and are removed by the transpiler pass included in Pytket.
This creates an approximate QFT circuit, which has reduced connectivity compared to the fully connected exact QFT circuit. 
Therefore, fewer e-bits may be required to distribute it.
It is important to note that while the approximate QFT is a useful circuit to use in practice~\cite{coppersmith_approximate_1994,nam_approximate_2020}, it is unreasonable to compare a partitioned approximate QFT to a partitioned exact QFT; for a reasonable comparison, all partitioners should be benchmarked on the approximate QFT circuit, and our analytical gate-packed method should be extended to the approximate QFT.
Such an analysis is outside the scope of this paper.

\section{Implementation} \label{sec:content-implementation}

In this section, we discuss running our gate-packed QFT circuit on simulators and on hardware.
Firstly, we discuss the methodology for mapping the circuit and distributed quantum system to a specific monolithic QPU topology, so that it can be run on existing quantum hardware.
Secondly, we discuss the methodology for calculating the fidelity of the QFT calculation.
Lastly, we analyze the results and compare them to previous results in the literature.

\subsection{Mapping to Distributed Quantum System Topology}

In our approach, we have been partitioning the QFT as an abstract circuit, with each QPU assumed to be fully connected and each qubit assumed to be equivalent; this means the partitioned QFTs we have been defining are also abstract circuits.
As in the non-distributed setting, the abstract circuit needs to be transpiled to the quantum system's topology.
While the monolithic setting only needs to concern itself with the topology constraints of a single QPU, there are additional constraints when transpiling a partitioned circuit to a distributed quantum system.
These constraints, involving both the communication qubits (used for entangling QPUs) and the data qubits (used for holding and processing quantum information), are as follows:
\begin{itemize}
    \item The communication qubits defined in the abstract partitioning must be mapped to physical communication qubits as defined by the network's topology.
    \item The only two-qubit quantum operation applied between communication qubits is entanglement generation (e-bit generation).
    \item All data qubits assigned to QPU $k$ by the partitioning scheme must be mapped to the physical qubits of QPU $k$.
\end{itemize}
To our knowledge, there is no way to use standard publicly available transpilation passes (e.g., SabreLayout~\cite{zou_lightsabre_2024, noauthor_sabrelayout_nodate}) while specifying the above constraints on the layout mapping (for the communication qubits only)~\cite{wagner_re_2025}.
This implies that we have two choices: we either (1) let the entire transpilation happen automatically, likely breaking the requirement that logical communication qubits get mapped to physical communication qubits, or (2) manually set a mapping ourselves that respects the topology constraints of our distributed quantum system, likely giving a poorly optimized circuit.

We choose to manually set the mapping ourselves, but to remedy the problem of obtaining a possibly poorly optimized circuit, we generate multiple different mappings manually and pick the one giving the most optimized circuit after transpilation.
This manual mapping approach is detailed in~\cite{silver_q-dice_2026} as part of a pipeline for distributed quantum circuit emulation; for completeness, we describe it informally here.
The initial step of this mapping algorithm is to generate a set of various mappings that respect the topology constraints of our distributed quantum system.
This step involves first mapping the communication qubits as required, then randomly mapping the remaining qubits according to the QPU map, but with the additional heuristic that data qubits are mapped near the communication qubits.
Once this set is established, we evaluate each mapping according to the resulting depth of the transpiled circuit.
Circuit depth is a simple yet useful proxy for noise in quantum circuits, as deeper circuits involve more noisy gates and a longer time for decoherence to manifest.
Finally, we pick the mapping with the smallest circuit depth for the circuit we want to run.

To build the distributed quantum system in a way that we can run simulations and map to hardware, we embed its topology into that of a monolithic IBM QPU.
We call this a \emph{virtually partitioned} QPU, as our layout and qubit-mapping approach allow single QPU to emulate a distributed quantum system.
This is equivalent to the \emph{QPU slicing} approach of~\cite{silver_q-dice_2026}, except that this work does not include a noise injection phase.

Specifically, we used the IBM Brisbane backend provided in Qiskit, virtually partitioned into two QPUs.
The Brisbane backend is an Eagle~r3 QPU, with early access to the improved dynamic circuit~\cite{ibm_new_2025}.
Dynamic circuits are circuits which include mid-circuit measurement and classical gate control, which is the mechanism used to define the circuit partitioning protocols of teledata and telegate.
Therefore, the improvements to dynamic circuits in the Eagle~r3 architecture---which involve faster execution (therefore less decoherence noise) and "improved mid-circuit measurements"%
\footnote{There are no details available for what "improved mid-circuit measurements" means on IBM Quantum's product update page~\cite{ibm_new_2025}.}%
---imply a potentially higher fidelity for running partitioned circuits.

\begin{figure}[!t]
    \centering
    \resizebox{0.6\linewidth}{!}{%
\begin{tikzpicture}
    \node[anchor=south west,inner sep=0] (image) at (0,0) {\includegraphics[width=\textwidth]{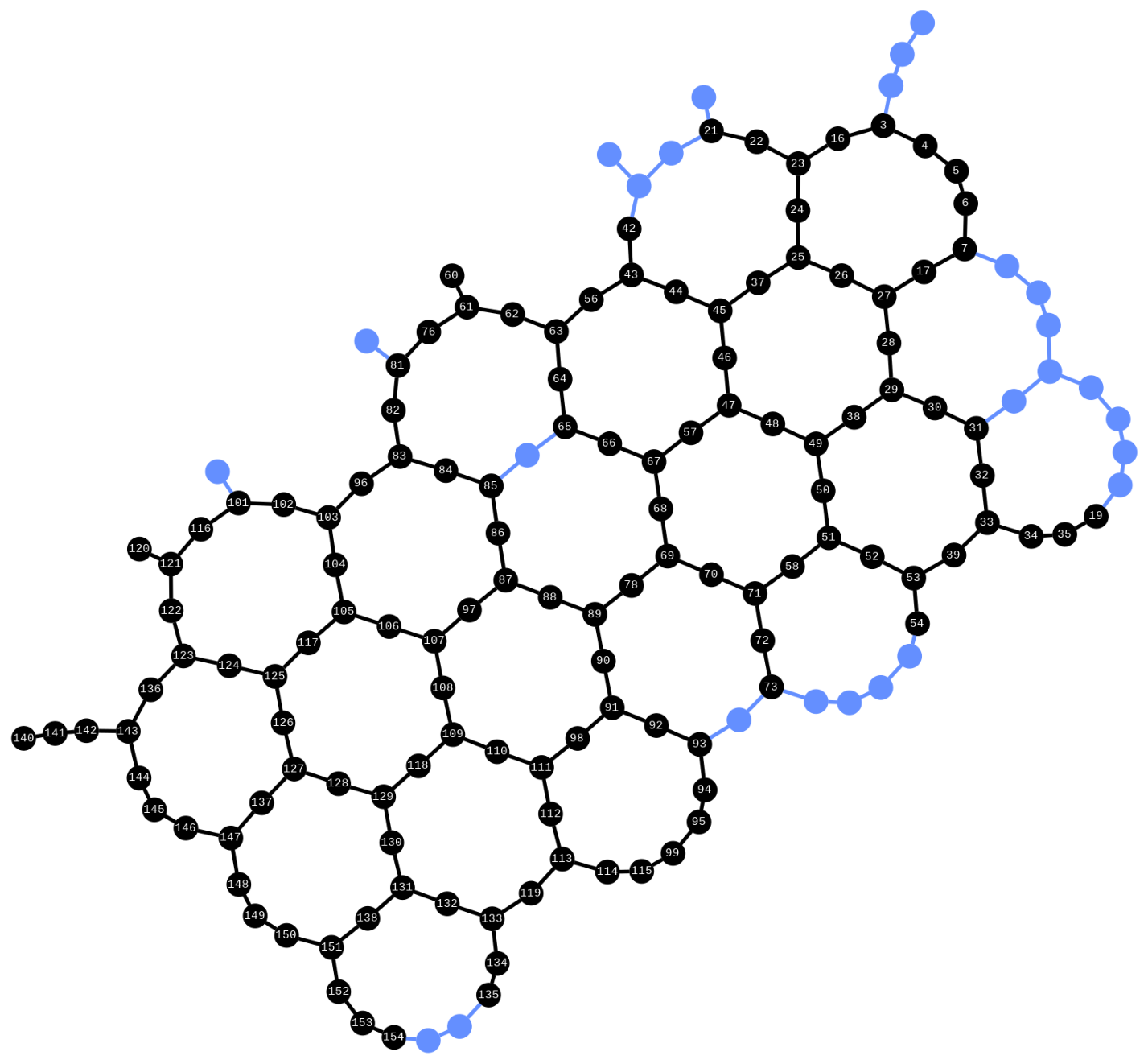}};
    \begin{scope}[x={(image.south east)},y={(image.north west)}]
        \draw[loosely dashed,red, ultra thick,decorate,decoration={snake, amplitude=.8mm, segment length=8mm, post length=0mm}] (0.29,0.84) -- (0.78,0.17);
    \end{scope}
\end{tikzpicture}
}
    \caption[Virtual Partition of a QPU Layout]{A layout of a partitioned QFT across a virtually split backend. The red line indicates a virtual partition. Only the middle qubits are connected as communication qubits.}
    \label{fig:split-backend}
\end{figure}

\subsection{Average Fidelity Calculation Methodology} \label{sec:fidelity-methodology}

We calculate the accuracy of a physical quantum circuit by averaging the fidelity of the noisy and ideal output states for some set of input states.

We independently conceptualized a methodology similar to \citeauthor{baumer_quantum_2024-1}'s methodology~\cite{baumer_quantum_2024-1} (see Section~\ref{sec:baumer-methodology}) to compute the average fidelity of a partitioned QFT circuit over the computational basis.

For completeness, we describe the circuit we use, based on the methodology described in Section~\ref{sec:baumer-methodology}, with terminology specific to the partitioned QFT.
To calculate the output fidelity 
\begin{equation}
    F\left(\QFT_n \ket{x}, \QFT^{(\text{distrib.})}_n \ket{x} \right)
\end{equation}
for $x \in \set{0, \dots, 2^n-1}$, where $\QFT_n$ is the ideal QFT circuit and $\QFT^{\text{(distrib.)}}_n$ is the distributed QFT circuit, we perform the following steps:
\begin{enumerate}
    \item Preparing the input state $\ket{x} = \ket{x_n} \dots \ket{x_0}$ with the circuit $\left(\X^{x_n} \tensor \dots \tensor \X^{x_0} \right)$ applied to the initial state $\ket{0^n}$.
    \item Running the noisy distributed circuit $\QFT^{\text{(distrib.)}}_{n}$. 
        We now have $\QFT^{\text{(distrib.)}}_{n} \ket{x}$.
    \item Running the inverse of the ideal version of the circuit $\QFT_n^\dagger$ as it would be applied to the input state, and counting the output $0$ in the computational basis.
        This is applying 
        \begin{align*}
            \bra{x} \QFT_n^\dagger &= \bra{0^n} \H^{\tensor n}  \left( \P(-2\pi 0.x_0) \right. \\
            &\phantom{= \bra{0^n} \H^{\tensor n}} \tensor \P(-2\pi 0.x_1 x_0) \\
            &\phantom{= \bra{0^n} \H^{\tensor n}} \tensor \dots \\
            &\phantom{= \bra{0^n} \H^{\tensor n}} \left. \tensor \P(-2\pi 0.x_n\dots x_0) \right) \;.
        \end{align*}
    \item Repeating these steps a sufficient number of times to get a good estimate for the probability of measuring the~$\ket{0^n}$~state, which corresponds to our fidelity.
\end{enumerate}

While~\textcite{baumer_quantum_2024-1} do a similar procedure by sampling \mbox{$x \in \set{0 \dots 2^n-1}$} uniformly, we choose to sample all $2^n$ input states in our basis the same number of times.
This is more accurate but less computationally efficient; however, this inefficiency does not affect us as we will only need to compute results for small $n$ (as large $n$ will be less noisy).

\subsection{Fidelity Results}

\begin{figure}[!t]
    \centering
    \includegraphics[width=0.5\linewidth]{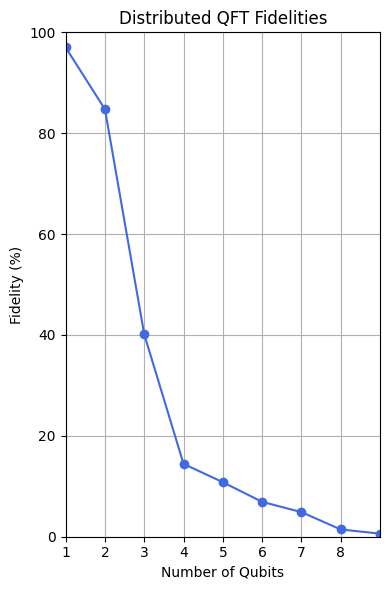}
    \caption[Simulated Average Fidelities]{Simulated average fidelity of the gate-packed QFT partitioned across two QPUs. The average fidelity is computed according to the methodology of Section~\ref{sec:fidelity-methodology}. The QFT sizes increase until the average fidelity is below 1\%, here at 9 qubits (excluding the communication ancillas). The simulation was done using the AerSimulator for the Brisbane backend, which mimics the noise model of the IBM Brisbane QPU. }
    \label{fig:simulated-fidelities}
\end{figure}

In Fig.~\ref{fig:simulated-fidelities}, we see the simulated average fidelities, while in Fig.~\ref{fig:hardware-fidelities}, we see the same as implemented on a quantum computer.
We can compare this to~\cite[Fig.~2~(a)]{baumer_quantum_2024-1}, which is a similar plot but with a hardware implementation on IBM Kyiv of four variants of the QFT circuit: the traditional QFT circuit with and without dynamical decoupling (DD) and the semiclassical QFT with and without DD.
We note that our simulation performs similarly to their traditional QFT without dynamical decoupling, and so worse than the three other variants.
It is not surprising that both our gate-packed QFT circuit partitioning and their traditional QFT are worse than their semiclassical QFT, as both the traditional QFT and the gate-packed distributed QFT are coherent and unavoidably require many two-qubit gates. 
Furthermore, it is unsurprising that our gate-packed partitioned QFT and their no-DD traditional QFT perform worse than their QFT with DD error suppression; dynamical decoupling and other error suppression strategies have the potential to improve the average fidelity of our gate-packed partitioned QFT, but this was outside the scope of our research.

We emphasize that our gate-packed QFT circuit partitioning has an advantage over their traditional QFT, which is that it can scale by adding more QPUs, while their traditional QFT is bounded by the size of its single QPU.
Furthermore, our gate-packed QFT circuit partitioning can be used in applications where the result needs to stay coherent, while their semiclassical QFT includes measuring all qubits at the end.

\begin{figure}[!t]
    \centering
    \includegraphics[width=0.6\linewidth]{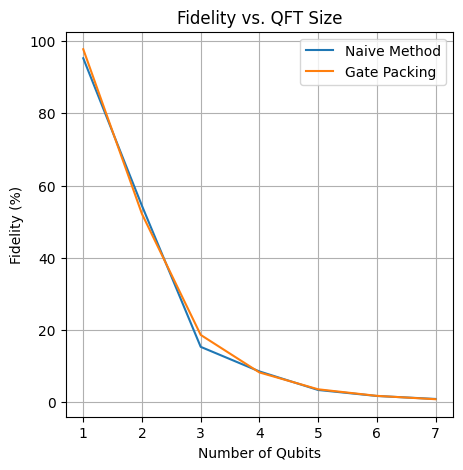}
    \caption[Average Fidelities on Virtually Partitioned IBM QPU]{Average fidelities of the naively partitioned QFT and our gate-packed partitioned QFT across two virtual QPUs on the IBM Brisbane QPU, a Eagle~r3 QPU with early access to the new dynamic circuits. The average fidelity is computed according to the methodology of Section~\ref{sec:fidelity-methodology}. The QFT sizes increase until the average fidelity is below 1\%, here at 7 qubits (excluding the communication ancillas).}
    \label{fig:hardware-fidelities}
\end{figure}

We also proceed with an analysis of the fidelity variance in Fig.~\ref{fig:fidelity-nonaveraged}.
These data are from the same experiment as for average fidelities (Fig.~\ref{fig:hardware-fidelities}), but isolated to the $4$\dash{}qubit gate-packed QFT.
We see that for each input $\ket{x}$, $x \in \set{0, \dots, 2^4-1}$, the standard deviation is small, meaning that our choice of 200~samples per input $x$ is sufficient for the purposes of estimating the fidelity for each input.
Furthermore, we see that across inputs, the fidelity remains rather stable, hovering around $8\%$.
Separately, in Fig.~\ref{fig:fidelity-middle}, we look at the fidelity across all QFT sizes $n$, but isolating to the middle input $x=2^{n-1}$.
We see that the standard deviation remains similarly sized across all QFT sizes, showing that the choice of 200~samples per input $x$ to estimate the output fidelities does not need to be adjusted as the circuit size increases. 

\begin{figure}[!t]
    \centering
    \includegraphics[width=0.7\linewidth]{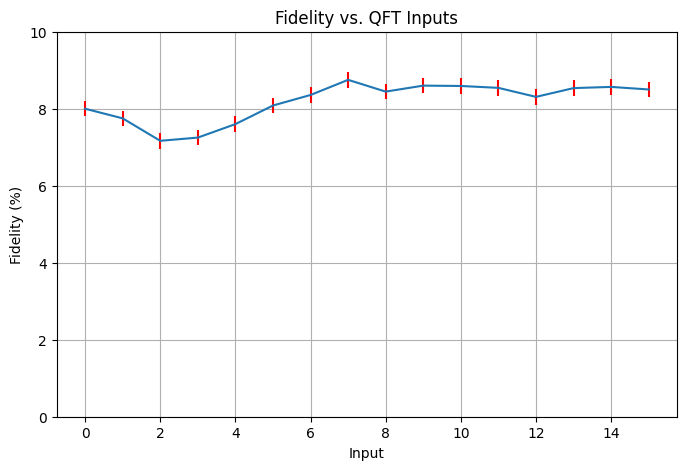}
    \caption[Variance of Fidelity Across Inputs]{Output fidelities of the $4$\dash{}qubit gate-packed QFT on the IBM Brisbane QPU. The x\dash{}axis corresponds to the QFT input $\ket{x}$, where $x \in \set{0, \dots, 2^4-1}$. The y\dash{}axis corresponds to the fidelity computed with the methodology of Section~\ref{sec:fidelity-methodology}, and each value is an average over 200 measurement results. Note that the y\dash{}axis is scaled to the range of $0-10\%$. The error bars in red correspond to one standard deviation. }
    \label{fig:fidelity-nonaveraged}
\end{figure}

\begin{figure}[!t]
    \centering
    \includegraphics[width=0.6\linewidth]{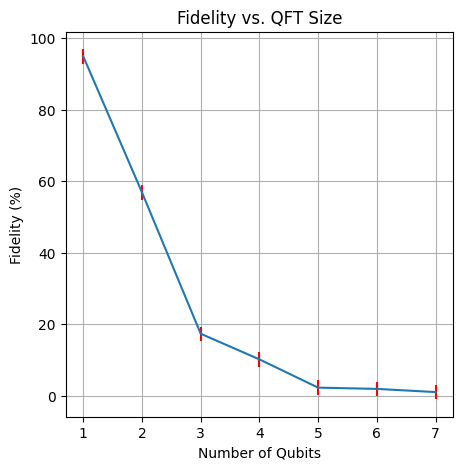}
    \caption[Variance of Average Fidelities]{Output fidelities of the gate-packed partitioned QFT for input $\ket{x} = \ket{2^{n-1}}$ across two virtual QPUs on the IBM Brisbane QPU (a Eagle~r3 QPU). The QFT sizes increase until the output fidelity is below 1\%. The error bars in red correspond to one standard deviation.}
    \label{fig:fidelity-middle}
\end{figure}

We also ran the naively partitioned QFT circuit on a hardware QPU alongside our gate-packed partitioned QFT circuit, with results reported in Fig.~\ref{fig:hardware-fidelities}.
We note that the results are very similar, though the gate-packed QFT exhibits a slightly higher fidelity for the three-qubit QFT.
We suspect that this negative result is due to the large noise presence on the QPU even in the absence of circuit partitioning~\cite{baumer_quantum_2024-1}, especially with the large number of gates that the QFT needs to run.
This large amount of noise---derived from coherent two-qubit gates---may be dominating the noise from mid-circuit measurement and control (from the starting and ending processes), giving similar fidelity results no matter the number of e-bits used to implement the non-local operations.
Further research is needed.

We see in Fig.~\ref{fig:hardware-times} that despite the similarity in average fidelities, the e-bit savings of the gate-packing-based method translate to time savings in running the QFT. 
This is due to e-bit counts being directly proportional to invocations of quantum teleportation protocols, which are far slower than coherent quantum operations and so dominate circuit runtime.
The circuit speedup is especially evident for an increasing number of qubits; as shown in Section~\ref{sec:comparison-yim}, the e-bit count for the naive scheme scales quadratically in the number of qubits, compared to linearly for our gate-packing-based scheme.

\begin{figure}[!t]
    \centering
    \includegraphics[width=0.7\linewidth]{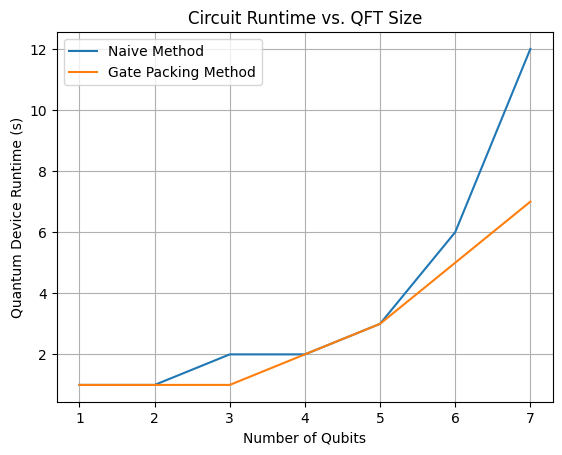}
    \caption[Time to Compute Average Fidelities on IBM QPU]{Time to compute the average fidelity of the naively partitioned QFT and of our gate-packed partitioned QFT across a virtually 2-partitioned IBM Brisbane QPU, for increasing QFT size. The average fidelity is computed according to the methodology of Section~\ref{sec:fidelity-methodology}. Note that the data are only made available by IBM to 1 second of precision.}
    \label{fig:hardware-times}
\end{figure}

\section{Conclusions} \label{sec:conclusions}

In this work, we minimized entanglement resources for the distributed quantum Fourier transform (QFT) using teleportation-based gate packing. Specifically, we focused on reducing total e-bit consumption when running the QFT circuit on a distributed quantum system using a gate-packed implementation of the QFT circuit, subject to (i) fixed QPU decomposition, (ii) teleportation-based non-local gates, and (iii) bounded communication ancilla qubits per QPU. Under the above constraints, we employed a methodology to reduce entanglement resources within the circuit, where the resource is measured in terms of the required e-bit count.

Our main contribution is a characterization of a minimized entanglement cost under these constraints. We analytically proved a lower bound on the required number of e-bits based on arguments derived from the QFT circuit's qubit interaction structure, and we presented a circuit construction scheme that achieves this bound. Central to this result is the observation that, for the standard QFT decomposition, entanglement consumption is minimized when qubits are assigned to processing units in a manner that respects the intrinsic ordering of two-qubit interactions, ensuring that non-local gates are introduced only when required by the circuit structure. This establishes that no gate-packing-based distributed implementation of the exact QFT can asymptotically improve upon our construction without relaxing one of the stated assumptions.

While we have shown our gate-packed QFT circuit is optimal in the bipartite case due to the coherent communication complexity of the QFT, we do not claim optimality in the multipartite case. 
This is especially important when increasing the number of communication ancillas that can be used per QPU, where detached gates become important to use. Rather, our result characterizes a bounded solution within a practically motivated compilation framework that aligns with current teleportation-based distributed quantum architectures.

Beyond the specific case of the QFT, our analysis highlights a more general structural insight: circuits whose interaction graphs exhibit sequential or triangular connectivity admit entanglement-optimal distributed structures via similar partitioning strategies. This suggests that similar optimization techniques may apply to other quantum algorithms with comparable interaction patterns, such as quantum walks and arithmetic, and motivates the study of interaction-graph-aware compilation as a key principle in optimizing distributed quantum algorithms.

In addition to our theoretical analysis, we conducted an empirical evaluation of several general-purpose circuit partitioning compilers applied to distributed QFT instances. 
These experiments demonstrated strong evidence for the effectiveness of our method, with the best partitioners yielding similar results to our own, with exceptions arising when distributed architecture constraints were violated. This reinforces the necessity of structure-aware optimization in future works. Our results thus position our method as a practical benchmark against which future distributed compilation heuristics can be evaluated.

We further validated our approach by implementing distributed versions of the QFT on both simulated and physical quantum platforms. Leveraging a custom software pipeline for executing distributed circuits, we deployed our gate‑packed QFT on IBM’s quantum devices and evaluated its performance by analyzing the average fidelity across the computational basis. These implementations demonstrate the feasibility of the proposed partitioning strategy under realistic constraints, the primary one being ancilla availability. 
While the primary focus of this work is entanglement optimization rather than noise resilience, experimental results confirm, within the scope of these experiments, that the entanglement footprint achieved by our construction does not introduce observable degradation in hardware implementation. Rather, we attribute the sharp fidelity drop to hardware noise native to highly connected circuits. This practical demonstration not only validates the correctness of our circuit but also provides a clear methodology for evaluating deployable distributed quantum algorithms on today's hardware.

Finally, this work opens several directions for future research. For one, the scope of QFT circuits under consideration can be expanded, including, for example, the fast parallel circuit of \citeauthor{cleve_fast_2000}~\cite{cleve_fast_2000}, which has a very different structure from all QFT circuits studied here.
Alternatively, the core of this paper can be repeated for the approximate QFT~\cite{coppersmith_approximate_1994}, which has a different structure from the exact QFT circuit, and which might be more amenable to current QPU device topologies~\cite{cleve_fast_2000, baumer_approximate_2025}.
Furthermore, the implementations on quantum hardware (Section~\ref{sec:content-implementation}) can be expanded to all partitioned QFT circuits under consideration (described in Sections~\ref{sec:other-distributed-qft} and~\ref{sec:proposed-dqft}) for completeness, and dynamical decoupling error mitigation can be used as in~\cite{baumer_quantum_2024-1} to improve circuit fidelity.

%% file: appendix.tex
\appendices

\section{Proof of Optimal QPU Mapping for Gate-Packed Distributed QFT}

Let $\lastactive: k \mapsto \max\phi^{-1}(k)$ be the mapping of QPU $k$ to the qubit assigned to it that will be active last.
Therefore, for time slice $j$, the number of active QPUs is $a_j = \abs{\set{k \in \set{1 \dots m} \st \lastactive(k) \leq j}}$.
Furthermore, without loss of generality, assume QPUs indices $1,\dots,m$ correspond to being ordered by $\lastactive$, i.e., $\lastactive(1) \leq \lastactive(2) \leq \dots \leq \lastactive(m)$.

We start by proving the following lemma:
\begin{lemma}\label{lemma:optimal-e-bits}
The number of e-bits required for a gate-packed distributed QFT is
\begin{equation} \label{eq:lemma-optimal-e-bits}
    E = \sum_{k=1}^{m-1} \lastactive(k) \;.
\end{equation}
\end{lemma}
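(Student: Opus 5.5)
The plan is to start from the e-bit formula in Equation~\eqref{eq:e-bit formula active}, $E=\sum_{j=1}^{n}(a_j-1)$, and evaluate $\sum_j a_j$ by exchanging the order of summation (a double-counting argument over pairs (time slice, QPU)). The key input is the observation from Section~\ref{sec:analysis-new} that QPU $k$ is active in exactly the time slices $1,\dots,\lastactive(k)$. This holds because qubit $j$ is active in slice $t$ iff $j\ge t$, and a QPU is active iff one of its qubits is. So the activity indicator of QPU $k$ in slice $j$ is $[\,j\le \lastactive(k)\,]$, and
\begin{equation*}
a_j=\abs{\set{k\in\set{1\dots m} \st \lastactive(k)\ge j}} .
\end{equation*}
I will note explicitly that the inequality must read $\lastactive(k)\ge j$, not $\le j$ as written above. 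This is consistent with the boundary facts $a_1=m$ and $a_n=1$ stated earlier.

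The steps, in order, are as follows. First, write $\sum_{j=1}^n a_j=\sum_{j=1}^n\sum_{k=1}^m [\,j\le\lastactive(k)\,]$. Second, swap the sums to get $\sum_{k=1}^m\abs{\set{j\in\set{1\dots n}\st j\le\lastactive(k)}}=\sum_{k=1}^m\lastactive(k)$; this uses $1\le\lastactive(k)\le n$, which holds because every QPU is nonempty (all QPUs are fully filled, $n_k\ge1$). Third, subtract the $n$ coming from the $-1$ terms, which gives $E=\sum_{k=1}^m\lastactive(k)-n$.

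Finally, I will use the ordering convention $\lastactive(1)\le\dots\le\lastactive(m)$ to identify the last term. Qubit $n$ is assigned to some QPU, and no qubit exceeds $n$, so $\max_k\lastactive(k)=n$. Under the ordering this maximum is $\lastactive(m)$, so $\lastactive(m)=n$ cancels the $-n$ and yields $E=\sum_{k=1}^{m-1}\lastactive(k)$, which is Equation~\eqref{eq:lemma-optimal-e-bits}. If several QPUs tie, the convention still places one attaining $n$ at index $m$, so nothing changes.

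The main obstacle is not computational but definitional. I need to justify cleanly that the gate-packed scheme spends exactly one e-bit per non-root active QPU per slice. This means checking two things: that an active non-root QPU always hosts at least one target of a $\CP$ rooted at $j$, and that inactive QPUs host none. The first holds because an active QPU contains some qubit $l\ge j$; if $l=j$ it is the root QPU, and otherwise $l>j$ is a target. The second holds because all targets $l>j$ are active. The root QPU itself needs no e-bit. This confirms Equation~\eqref{eq:e-bit formula active}, after which the lemma reduces to the summation swap above.
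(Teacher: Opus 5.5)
Your proof is correct and is essentially the paper's argument. Both rest on QPU $k$ being active in exactly the slices $1,\dots,\lastactive(k)$: the paper groups the $a_j$ into constant blocks with values $m, m-1, \dots, 1$ and telescopes, while you count the same staircase by swapping the order of summation and let $\lastactive(m)=n$ absorb the $-n$. You are also right that the appendix's set should read $\lastactive(k)\ge j$ rather than $\lastactive(k)\le j$, and your caveat about ties is unnecessary, since the sets $\phi^{-1}(k)$ are disjoint and so the values $\lastactive(k)$ are pairwise distinct.
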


\begin{proof}
    Now, the sequence $a_1 \dots a_n$ of number of active QPUs per time slice is of the form
    \begin{align*}
        a_{1}, \dots, a_{\lastactive(1)} &= m \\
        a_{\lastactive(1)+1}, \dots, a_{\lastactive(2)} &= m-1 \\
        \cdots \\
        a_{\lastactive(m-2)+1}, \dots, a_{\lastactive(m-1)} &= 2 \\
        a_{\lastactive(m-1)+1}, \dots, a_{\lastactive(m)} &= 1 \;.
    \end{align*}
    This is true since we know that all QPUs start as active ($a_1 = m$) and that QPUs deactivate sequentially after their last active qubit becomes inactive ($a_{\lastactive(k)+1} = a_{\lastactive(k)}-1$).
    Using this fact to group $a_j$ by their values in Equation~\eqref{eq:e-bit formula active} and telescoping the sum, we get 
    \begin{equation*}
         E = \sum_{k=1}^{m-1} \lastactive(k).
    \end{equation*}
\end{proof}

We now move to the main theorem.

\setcounter{theorem}{0}
\begin{theorem}[restated]
    Let  $QPU_k$ be the $k$th smallest QPU. 
    The optimal assignment is for $QPU_1$ to be assigned the first (most significant) $n_1$ qubits, $QPU_2$ the next $n_2$ most significant qubits, etc. until $QPU_m$ is assigned the $n_m$ least significant qubits.
    This can be equivalently stated as $QPU_k$ being assigned qubits $$\phi^{-1}(k) = \set{n-\sum_{j=m}^{k+1}n_{j}, \dots, n - \sum_{j=m}^{k}n_{j}+1} \;. $$
\end{theorem}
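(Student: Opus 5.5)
The plan is to reduce the theorem to a lower bound on each term of Lemma~\ref{lemma:optimal-e-bits}, and then show that the proposed assignment meets every one of these bounds with equality. By the lemma, for any assignment $\phi$ (with QPUs relabelled so that $\lastactive(1) \leq \dots \leq \lastactive(m)$), the cost is $E(\phi) = \sum_{k=1}^{m-1} \lastactive(k)$. Note that $\lastactive(m) = n$ always, since qubit $n$ lives on some QPU. Because of this, the last term is excluded from $E$ and plays no role. The task is therefore to minimize the sum of the $m-1$ smallest ``last qubit'' indices.

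The key step is a counting bound. Fix $k \leq m-1$ and look at the QPUs labelled $1,\dots,k$ in the $\lastactive$ ordering. Every qubit assigned to any of them has index at most $\lastactive(k)$. These qubits are distinct, so
\begin{equation*}
    \lastactive(k) \;\geq\; \sum_{i=1}^{k} \abs{\phi^{-1}(i)} \;\geq\; \sum_{i=1}^{k} n_{(i)} \;,
\end{equation*}
where $n_{(1)} \leq \dots \leq n_{(m)}$ are the sorted capacities. The second inequality holds because the fully-filled QPUs $1,\dots,k$ have some $k$ of the capacities as their sizes, and any $k$ capacities sum to at least the $k$ smallest. Summing over $k$ gives
\begin{equation*}
    E(\phi) \;\geq\; \sum_{k=1}^{m-1}\sum_{i=1}^{k} n_{(i)} \;=\; \sum_{k=1}^{m-1} n_{(k)}(m-k) \;,
\end{equation*}
after exchanging the order of summation.

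For achievability, I would take the assignment in the statement: the smallest QPU receives qubits $1,\dots,n_{(1)}$, the next smallest receives the following $n_{(2)}$ qubits, and so on. With this assignment the $\lastactive$ ordering coincides with the size ordering, and $\lastactive(k) = n_{(1)} + \dots + n_{(k)}$ exactly. Every inequality above is then tight, so this assignment attains the lower bound and is optimal; it also recovers Equation~\eqref{eq:optimal-e-bits}. I would then check that the explicit index set in the statement matches this block $\{\sum_{i<k} n_{(i)} + 1, \dots, \sum_{i\leq k} n_{(i)}\}$, which is just routine rewriting of $n - \sum_{j=m}^{k+1} n_j$.

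The main obstacle is bookkeeping rather than depth. The appendix's formula for $a_j$ uses ``$\leq$'' where the meaning should be that QPU $k$ is active while $j \leq \lastactive(k)$, so I would restate the lemma's grouping carefully to be sure the telescoped sum is $\sum_{k<m}\lastactive(k)$. The second delicate point is justifying the step from ``sizes of some $k$ QPUs'' to ``$k$ smallest sizes''. This relies on all QPUs being fully filled, as assumed in the problem statement, and I would make that assumption explicit at that step.
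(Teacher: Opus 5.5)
Your proof is correct and is essentially the paper's argument: your inequality $\lastactive(k) \geq \sum_{i\le k}|\phi^{-1}(i)|$ is exactly the paper's bound $\max\Rem_k \geq |\Rem_k|$, which is attained for every $k$ at once by the nested initial segments $\{1,\dots,|\Rem_k|\}$. The one difference is that you fold the choice of QPU order into each term by bounding with the $k$ smallest capacities. The paper instead first optimizes for a fixed deactivation order and then separately asserts that QPU sizes should increase with $k$, so your version handles that final step more cleanly and rigorously.
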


\begin{proof}
    As in Lemma~\ref{lemma:optimal-e-bits}, assume QPUs are ordered by $\lastactive$, i.e., the map $k \mapsto \lastactive(k)$ is monotonically increasing.
    
    The structure of the proof is as follows.
    First, we present a general procedure for arbitrarily assigning qubits to QPUs.
    Second, we rewrite the formula of Lemma~\ref{lemma:optimal-e-bits} to match the given procedure.
    Third, we show that minimizing this formula gives an optimal assignment.
    Finally, we show that this optimal assignment has $QPU_k$ being the $k$th smallest QPU.

    Consider the following general procedure for assigning qubits, which will proceed iteratively by assigning qubits in $m$ steps $m, m-1, \dots, 1$, starting with the qubits assigned to QPU $m$ and finishing with qubits assigned to QPU $1$.
    
    Let $\Rem_k$ be the set of qubits that remain unassigned at step $k$, after having already assigned qubits to QPUs $m, m-1, \dots, k+1$.

    Since both qubits and QPUs are indexed by order of becoming inactive, we know the last active qubit not in QPUs $k+1, \dots, m$ will be $\max\Rem_k$ and conversely the last active qubit of QPU $k$ will be
    \begin{equation} \label{eq:lastactive is max}
        \lastactive(k) = \max \Rem_{k} \;.
    \end{equation}
    Furthermore, any other qubits in $\phi^{-1}(k)$ will not be the last active qubit of this QPU. 
    Let these $n_{k}-1$ other qubits be chosen arbitrarily from the currently unassigned qubits $\Rem_{k} \setminus \set{\lastactive(k)}$.
    Let the remaining unassigned qubits after choosing $\phi^{-1}(k)$ be defined as $\Rem_{k-1} :=  \Rem_{k} \setminus \phi^{-1}(k) $.

    Since we know the base case $\Rem_{m} = \set{1,\dots,n}$, the above is sufficient for an induction.

    Now, by Lemma~\ref{lemma:optimal-e-bits}, we need to minimize $E = \sum_{k=1}^{m-1} \lastactive(k)$.
    By Equation~\ref{eq:lastactive is max} of the above induction, we therefore need to minimize
    \begin{equation*}
        E = \sum_{k=1}^{m-1} \max \Rem_k \;.
    \end{equation*}
    
    Suppose we were to minimize $\max\Rem_k$ independently for each $k \in \set{1, \dots,m-1}$ (ignoring $k=m$ since we know $\max\Rem_m = n$).
    For any given $k$, since $\Rem_k \subseteq \set{1, \dots, n}$ we have the lower bound $\max \Rem_k \geq \abs{\Rem_k}$ which is attained by 
    \begin{equation*}
        \Rem_k := \set{1, \dots, \abs{\Rem_k}} \;.
    \end{equation*}
    Since $\Rem_k \subseteq \Rem_{k+1}$, these locally optimal choices are all compatible and therefore produce a globally optimal assignment $\phi^{-1}(k) = \Rem_k \setminus \Rem_{k-1}$ (where $\Rem_0 := \emptyset$):
    \begin{align*}
        \phi^{-1}(m) &= \set{n,\dots,n-n_{m}+1} \\
        \phi^{-1}(m-1) &= \set{n-n_{m}, \dots, n-n_{m}-n_{m-1}+1} \\
        &\cdots \\
        \phi^{-1}(k) &= \set{n-\sum_{j=k+1}^{m}n_{j}, \dots, n - \sum_{j=k}^{m} n_{j}+1} \\
        &\cdots \\
        \phi^{-1}(1) &= \set{ n-\sum_{j=2}^{m}n_{j}, \dots, 1} = \set{ n_{1}, \dots, 1} .
    \end{align*}
    
    With this assignment, the e-bit count is
    \begin{align*}
        E &= \sum_{k=1}^{m-1} \max \Rem_k \\
            &= \sum_{k=1}^{m-1} \left( n - \sum_{j=k+1}^{m} n_j \right) \\
            &= (m-1)n - \sum_{k=1}^{m-1} \sum_{j=k+1}^{m} n_{j} \\
            &= (m-1)n - \sum_{k=1}^{m-1} k n_k \;.
    \end{align*} 
    
    Now, all that remains is fixing the assignment of $k \mapsto n_{k}$, i.e., the ordering in which QPUs are made inactive first according to the above scheme. 
    To minimize the e-bit count, we must choose $n_{m}$ to be the size of the largest QPU, then $n_{{m-1}}$ the second largest, etc. Therefore, in addition to $QPU_k$ being the $k$th QPU to become inactive, it is also the $k$th smallest QPU.
\end{proof}

%% file: acknowledgements.tex
\section*{Acknowledgment}

We are grateful to Sean Wagner of IBM for his help with Qiskit transpilation and to Aharon Brodutch of IonQ for pointing us to~\citeauthor{rodney_van_meter_communications_2004}'s paper. 
Thanks to the IBM Quantum Support Team for their help getting us early access to the new dynamic circuit hardware.

This research was enabled in part by quantum device access provided by PINQ² (Quebec Digital and Quantum Innovation Platform) and by support provided by Compute Ontario and the Digital Research Alliance of Canada (\href{https://alliancecan.ca/}{alliancecan.ca}).

This research has received funding from the research project entitled “Quantum Software Consortium: Exploring Distributed Quantum Solutions for Canada” (QSC). QSC is financed by the National Sciences and Engineering Research Council of Canada (NSERC) Alliance Consortia Quantum program under grant number ALLRP587590-23.

This research was conducted in part with support for the first author by the Erasmus+: Erasmus Mundus programme of the European Union.

%% file: bibliography.tex
\printbibliography[heading=bibintoc]